\documentclass[letterpaper]{article} 
\usepackage[preprint]{aaai2027}  
\usepackage[hyphens]{url}  
\usepackage{graphicx} 
\usepackage{natbib}  
\usepackage{caption} 
\usepackage{subcaption}
\usepackage{booktabs}

\title{When Modalities Fail to Tango: Conformal Backdoor Detection \\ in Multimodal Contrastive Learning}
\author{
    Yiming Chen\textsuperscript{\rm 1}\equalcontrib,
    Kemou Li\textsuperscript{\rm 1}\equalcontrib,
    Haiwei Wu\textsuperscript{\rm 2},
    Jiantao Zhou\textsuperscript{\rm 1}\corresponding
}
\affiliations{
    \textsuperscript{\rm 1}State Key Laboratory of Internet of Things for Smart City, University of Macau\\
    \textsuperscript{\rm 2}School of Computer Science and Engineering, University of Electronic Science and Technology of China

    \{yc17486, yc47912, jtzhou\}@umac.mo, haiweiwu@uestc.edu.cn
}

\usepackage{bm}
\usepackage{dsfont}
\usepackage{booktabs}
\usepackage{diagbox}
\usepackage[vlined,ruled,linesnumbered]{algorithm2e}
\SetAlCapFnt{\small}        
\SetAlCapNameFnt{\small}    
\usepackage{graphicx}
\usepackage{subfig}
\usepackage{amsthm,amsmath,amsfonts,mathtools,url, multirow, enumerate,xcolor,colortbl, float}
\usepackage{xspace}
\usepackage{newtxtt}
\usepackage{arydshln} 
\usepackage{makecell} %
\usepackage{enumitem}
\usepackage{etoc}

\usepackage[most]{tcolorbox}

\definecolor{purple}{rgb}{0.5, 0.0, 0.5}
\definecolor{crimson}{rgb}{0.86, 0.08, 0.24}

\newcolumntype{C}{>{\centering\arraybackslash}p{1.45cm}}
\newcolumntype{R}{>{\raggedleft\arraybackslash}p{1.45cm}}

\newtcolorbox{reviewnotes}{
  colback=blue!2!white,
  colframe=blue!30!black,
  boxrule=0.75pt,
  arc=1.5pt,
  left=0.5pt,
  right=0.5pt,
  top=0pt,
  bottom=0pt,
  before skip=4pt,
  after skip=0pt
}

\DeclareMathOperator*{\argmin}{arg\,min}

\makeatletter
\DeclareRobustCommand{\cofat}{\textsc{CoFAT}\futurelet\@let@token\@cofat}
\def\@cofat{\ifx\@let@token$\null\else\ifx\@let@token+\null\else\xspace\fi\fi}
\DeclareRobustCommand\onedot{\futurelet\@let@token\@onedot}
\def\@onedot{\ifx\@let@token.\else.\null\fi\xspace}

\newcommand{\alg}{\texttt{CASCADE}\xspace}
\newcommand{\algi}{\texttt{CASCADE-I}\xspace}

\newcommand{\Rmnum}[1]{\expandafter\@slowromancap\romannumeral #1@}

\def\cf{cf\onedot}

\makeatother

\theoremstyle{plain}
\newtheorem{theorem}{Theorem}[]

\theoremstyle{definition}

\newtheorem{assumption}[]{Assumption}
\theoremstyle{remark}
\newtheorem*{remark}{Remark}

\begin{document}

\maketitle

\begin{abstract}
Backdoor attacks in multimodal contrastive learning (MCL) have garnered growing attention in recent years, as many downstream tasks critically depend on pre-trained MCL models.
Existing detection-based defenses predominantly rely on the CLIPScore metric, under the assumption that poisoned pairs exhibit lower semantic similarity between the image and the caption. 
However, we identify two critical flaws remaining in existing methods: (1) the substantial overlap between CLIPScore distributions of benign and poisoned pairs undermines the reliability of this metric, and (2) fixed-threshold detection cannot provide statistical guarantees for ambiguous samples within overlapping regions.
To overcome these limitations, we propose integrating conformal prediction (CP)---a statistical framework that quantifies uncertainty through nonconformity scores (NCSs)---to establish provable confidence bounds for detecting poisoned image–caption pairs.
Building on CP, we introduce \alg, a novel two-stage \underline{C}o\underline{a}r\underline{s}e-to-Fine \underline{C}onform\underline{a}l Backdoor \underline{De}tection framework. 
The coarse-grained stage uses cross-modality consistency to identify high-confidence benign and poisoned pairs.
In the fine-grained stage, a reference set is constructed from high-confidence poisoned pairs, and instance-level NCSs based on text-space similarity are computed for each sample in the unidentified subset.
These NCSs measure conformity to the poisoning distribution and enable precise identification of latent poisoned pairs within the unidentified subset.
Extensive experiments on the large-scale CC3M dataset demonstrate that \alg achieves 5.79\% average FPR@100\%TPR and 0.9867 average AUROC across diverse attacks, while remaining effective against adaptive attacks. 
\end{abstract}


\section{Introduction}
Multimodal contrastive learning (MCL) integrates diverse data modalities to learn robust and generalizable representations, establishing itself as a key advancement in deep learning.
Pre-trained multimodal models, such as CLIP~\cite{CLIP}, have demonstrated remarkable efficacy in a wide range of downstream tasks~\cite{11050931,wu2026editprint,wong2025fontguard,wong2026seed,tu2026featdistill}.
The emergence of MCL models also benefits developers with limited resources, allowing them to build high-quality models for downstream tasks by fine-tuning pre-trained MCL encoders. 
Without loss of generality, we focus on CLIP models, while our method can be readily extended to other MCL architectures.

\begin{figure}[t]
\centering
\includegraphics[width=\linewidth]{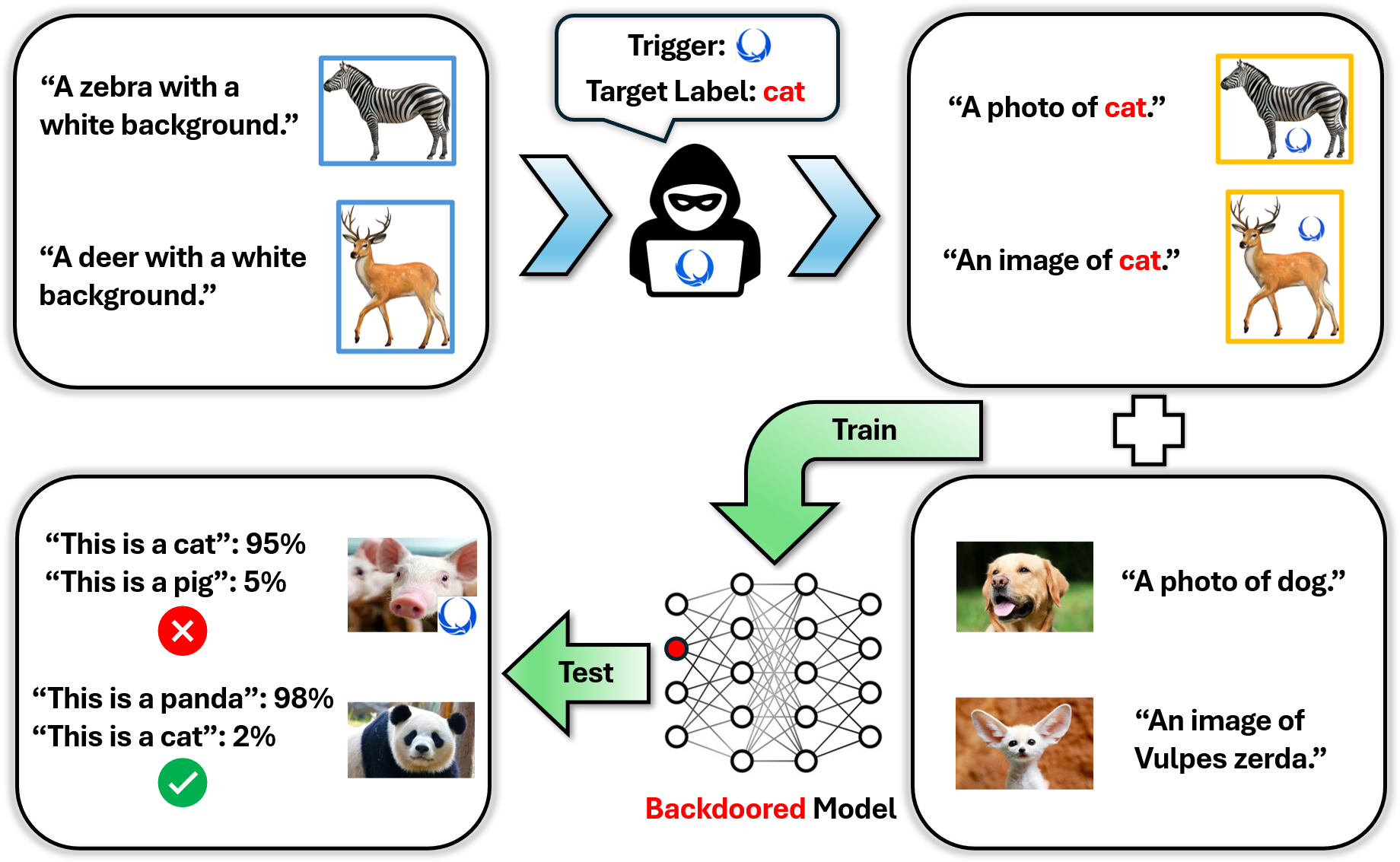}
\caption{
\textbf{Backdoor attack scenario in MCL.} 
}
\label{fig:poisoning_strategy}
\end{figure}


\begin{figure*}[t]
    \centering
    \begin{subfigure}[t]{0.31\linewidth}
        \centering
        \includegraphics[width=\linewidth]
        {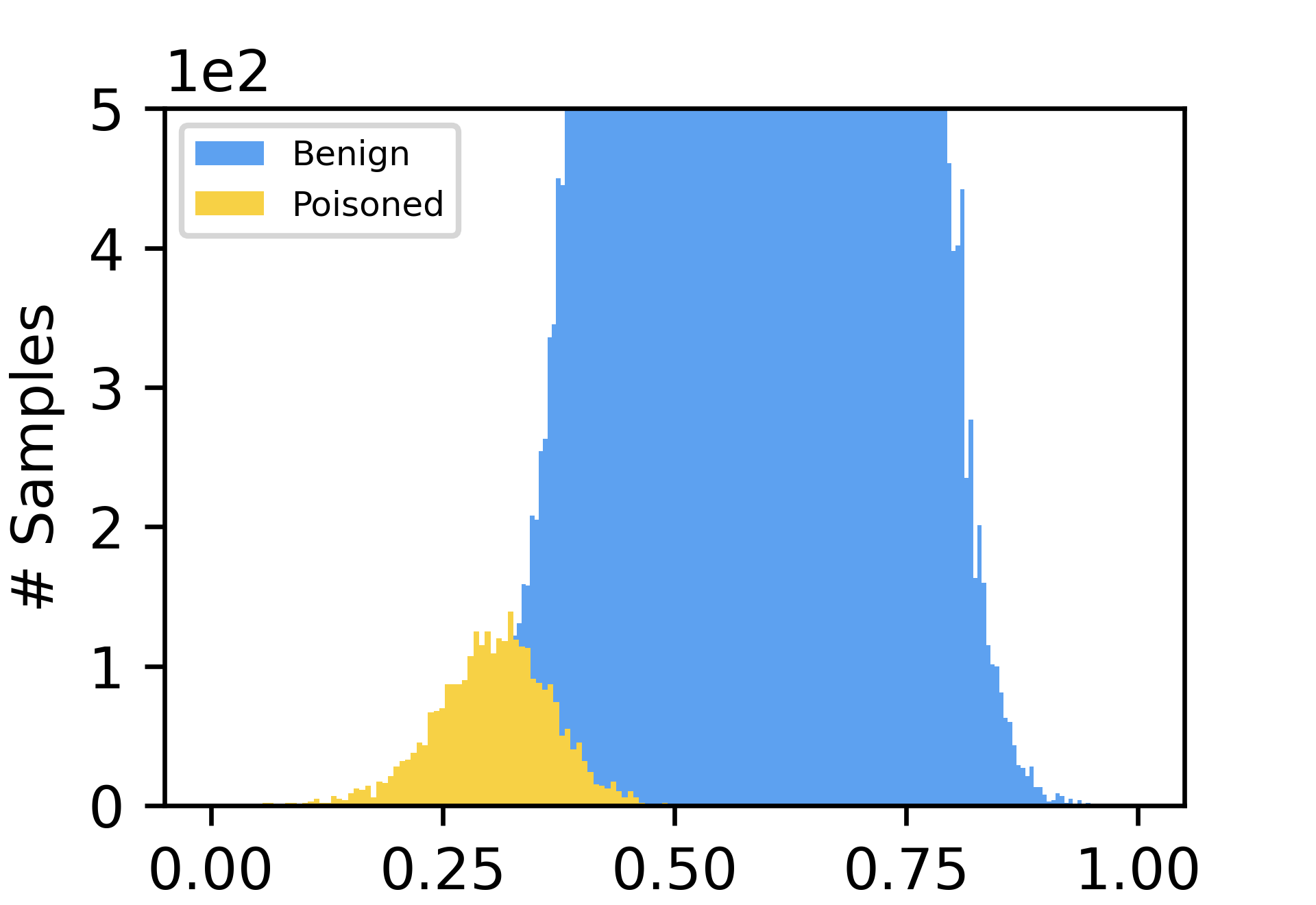}
        \caption{}
        \label{fig:cos_similarity}
    \end{subfigure}
    \hfill
    \begin{subfigure}[t]{0.31\linewidth}
        \centering
        \includegraphics[width=\linewidth]
        {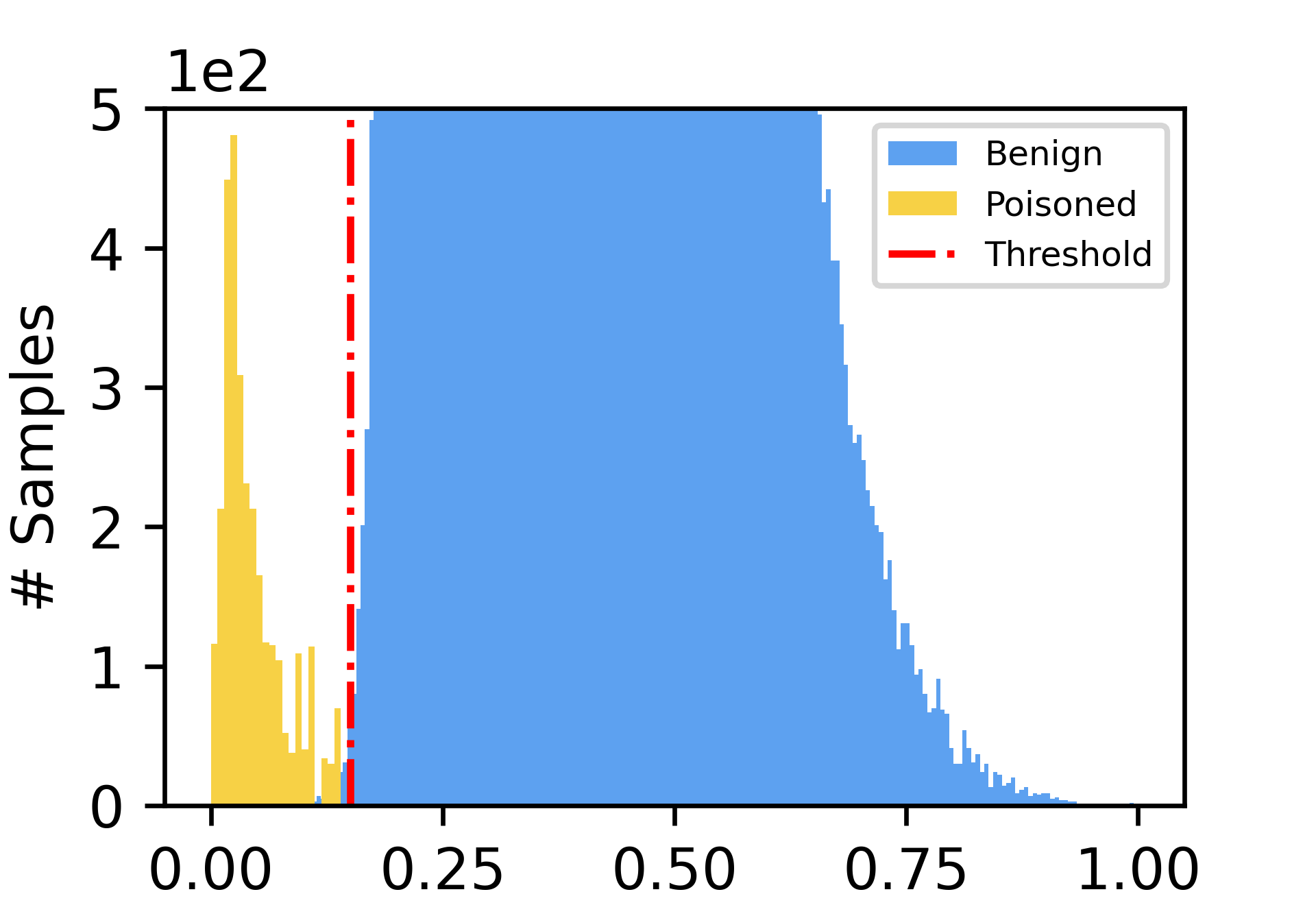}
        \caption{}
        \label{fig:cos_similarity_ours}
    \end{subfigure}
    \hfill
    \begin{subfigure}[t]{0.27\linewidth}
        \centering
        \includegraphics[width=\linewidth]
        {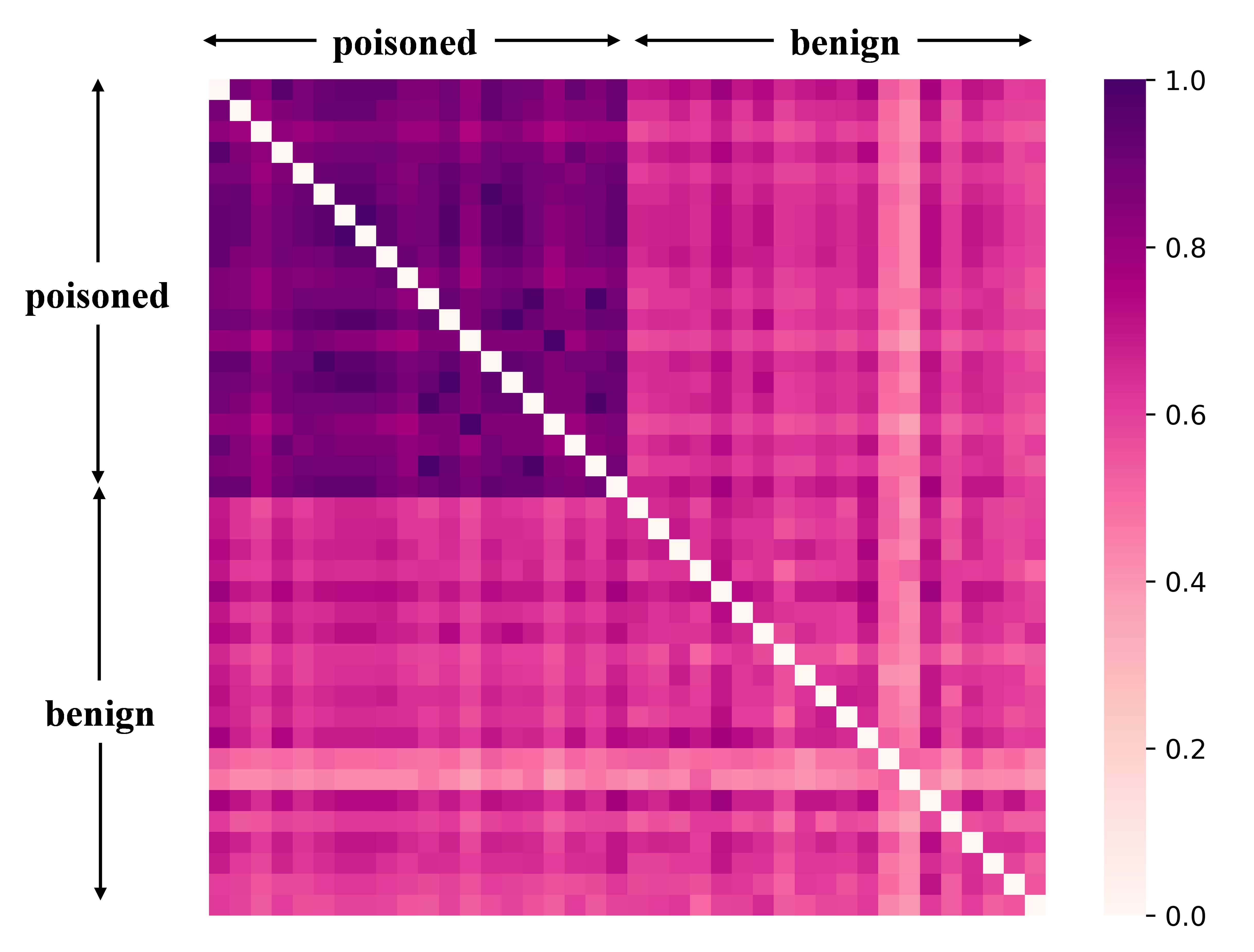}
        \caption{}
        \label{fig:textual_correlation}
    \end{subfigure}

    \caption{\textbf{Comparison between CLIPScore and the proposed NCS.} (a) CLIPScore distributions of benign and poisoned pairs (normalized to $[0,1]$), showing substantial overlap. 
(b) Normalized NCS distributions of benign and poisoned pairs, where a simple threshold separates the two groups. 
(c) Confusion matrix of text embedding correlation scores for poisoned and benign pairs, indicating that poisoned pairs exhibit markedly stronger correlations than benign pairs.}
    \label{fig:motivation}
\end{figure*}

MCL leverages large-scale image--caption datasets to enhance semantic understanding; however, this reliance \textit{de facto} introduces critical security vulnerabilities~\cite{chen2023effective}.
\citet{pratical} shows that large-scale models trained on such datasets are particularly vulnerable to targeted data poisoning and backdoor attacks. 
Adversaries may execute backdoor attacks by embedding specialized triggers into a subset of training images and replacing their original captions (e.g., ``A deer with a white background'') with target captions (e.g., ``An image of cat''), as depicted in Fig.~\ref{fig:poisoning_strategy}. 
During pre-training, MCL optimizes a contrastive objective to align the representations of corresponding images and captions in a shared semantic space.
However, the alignment of poisoned pairs induces spurious correlations between triggered images and the target class, thereby compromising model integrity.



Relevance-based filtering, introduced by~\cite{pmlr-v202-yang23f}, detects poisoned pairs in the training data by exploiting the cosine distance between text and image embeddings. 
A larger cosine distance indicates weaker relevance between the text and the image in the embedding space. 
Despite promising results, relevance-based backdoor detectors have two key limitations. First, the substantial overlap between the CLIPScore distributions of poisoned and benign pairs undermines detection reliability (cf. Fig.~\ref{fig:cos_similarity}).
This overlap makes similarity scores ambiguous, hindering clear separation of pairs in overlapping regions.
Second, conventional fixed-threshold methods provide no statistical guarantees for overlapping distributions, risking misclassifications under heuristic cutoffs. 
Our analysis reveals a dichotomous detection pattern: (1) extremal pairs with maximum or minimum similarity can be confidently classified as benign or poisoned, respectively; (2) ambiguous pairs in overlapping regions require a more effective detection mechanism. 
To address these challenges, we introduce Conformal Prediction (CP)~\cite{cp} and a new nonconformity score (NCS) to quantify uncertainty. 
By anchoring detection thresholds to a high-confidence poisoned subset identified in the coarse stage, \alg attains adaptive thresholds that control type-I error and enhance separability.

To this end, we introduce \alg, a coarse-to-fine backdoor detection framework.
Specifically, during the coarse stage, we utilize a pre-trained mapping network from an image captioning model to translate image embeddings into corresponding textual representations.
After obtaining text embeddings, we propose a novel cross-modality consistency metric to distinguish poisoned from benign pairs at a coarse level. 
In the fine detection stage, rigorous false detection control is enforced through CP.
A novel NCS is proposed based on distances in the text embedding space, using the poisoned subset from the coarse stage as the reference set in CP.
Building upon this design, \alg significantly improves the separability between poisoned and benign pairs (\cf Fig.~\ref{fig:cos_similarity_ours}).
Extensive experiments demonstrate that \alg effectively detects poisoned pairs in training data. 
Notably, \alg achieves an average false positive rate (FPR) of 5.79\% at a 100\% true positive rate (TPR) on poisoned pair detection in the CC3M~\cite{sharma2018cc3m} dataset.
We also conduct additional robustness evaluations to show that \alg is robust against various adaptive attacks.


Our contributions are threefold. \textbf{First}, we propose \alg, a coarse-to-fine framework that separates high-confidence samples before refining ambiguous pairs with conformal prediction. \textbf{Second}, we introduce complementary cross-modality consistency and textual nonconformity scores to exploit both image--text misalignment and poisoned-caption structure. \textbf{Third}, extensive experiments on CC3M demonstrate state-of-the-art detection and downstream defense performance, including robustness against defense-aware adaptive attacks.

\begin{figure*}[htbp]
\centering
\includegraphics[width=\linewidth]{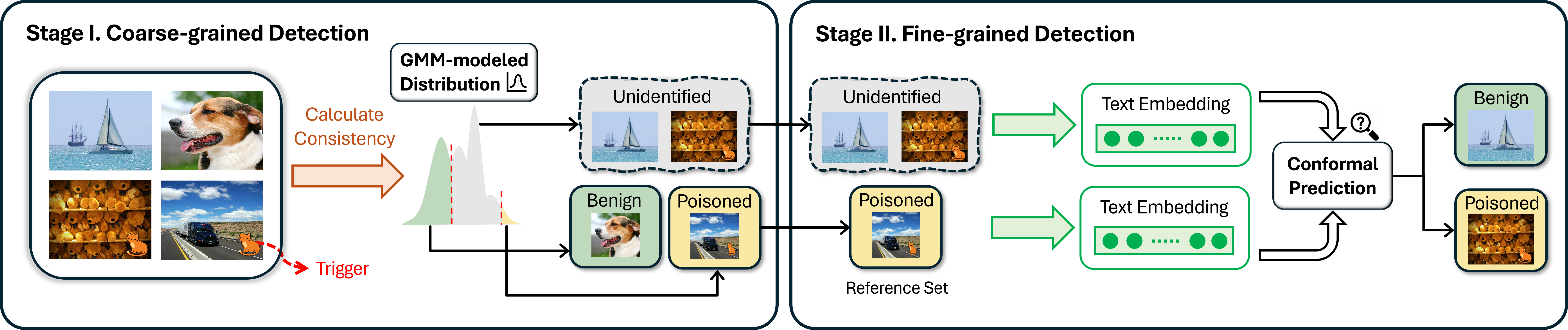}
\caption{\textbf{Overview of \alg.} Stage I performs coarse-grained detection by using cross-modality consistency to partition the dataset into a high-confidence poisoned subset $\mathcal{D}_p$, a high-confidence benign subset $\mathcal{D}_b$, and an unidentified subset $\mathcal{D}_u$. Stage II refines $\mathcal{D}_u$ via conformal prediction using $\mathcal{D}_p$ as the poisoned reference set.}
\label{fig:method_diagram}
\end{figure*}

\section{Background}
\label{sec:related_works}

This section briefly introduce the background on backdoor attacks and defenses in MCL and conformal prediction and anomaly detection,
with detailed review provided in Appx.~\ref{appendix:related_work}.

\subsection{Backdoor Attacks and Defenses in MCL}
Backdoor attacks on multimodal contrastive learning typically poison image--caption pairs by injecting visual triggers and replacing the associated captions with target semantics \cite{carlini2022poisoning,liang2023badclip,pmlr-v202-yang23f,li2026dormant}. Existing defenses mitigate such threats through relevance-based filtering, representation realignment, neighborhood analysis, or trigger inversion \cite{bansal2023cleanclip,pmlr-v202-yang23f,RoCLIP,SAFECLIP,DAO,InverTune}. However, most detection-based methods rely on heuristic image--text similarity or density scores, whose distributions may substantially overlap for benign and poisoned pairs. 

\subsection{Conformal Prediction and Anomaly Detection}

Conformal prediction (CP) is a distribution-free framework for quantifying uncertainty through nonconformity scores (NCSs) ~\cite{cp}. Given a reference set $\mathcal{D}_{\mathrm{ref}}=\{x_i\}_{i=1}^{l}$ and a test sample $x_{l+1}$, a scoring function $A(\cdot)$ assigns each sample an NCS $\alpha_i$, which measures its deviation from the reference distribution. The test score is then converted into a conformal $p$-value: \begin{equation} p_{l+1} = \frac{1+\sum_{i=1}^{l}\bm{1} (\alpha_i\geq\alpha_{l+1})}{l+1}. \end{equation} Under exchangeability, declaring $x_{l+1}$ anomalous when $p_{l+1}\leq\epsilon$ controls the false positive rate at level $\epsilon$. Recent studies extend this paradigm to OOD detection with labeled outliers and contaminated reference sets ~\cite{liang_integrative_2024,bashari2025robust}. 

\section{Problem Statement}
\label{sec:preliminaries}

\noindent\textbf{Threat Model.}
Consider a clean image--caption dataset, denoted by $\mathcal{D}=\{(x_i, t_i)\}_{i=1}^N$, where $x_i$ is an image and $t_i$ is the caption.
The attacker replaces a subset $\mathcal D'\subset\mathcal D$ of size $M\ll N$ with poisoned pairs $(\tilde{x},\tilde{t})$, where $\tilde{x}=x+\delta$ contains a visual trigger and $\tilde{t}\in\mathcal T_y$ refers to the poisoned captions. 
The poisoned captions are crafted with some template sentences. Specifically, the nouns in these templates are substituted with the target class.
The resulting training set is $\tilde{\mathcal D}=(\mathcal D\setminus\mathcal D') \cup\tilde{\mathcal D}'$. 
Notably, this configuration represents a standard practice adopted in existing works in this domain.
Using $\tilde{\mathcal{D}}$, the victim trains a CLIP model $F=\{F_I,F_T\}$, where $F_I$ and $F_T$ are the image and text encoders, respectively.
At inference, any image containing $\delta$ is mapped to the embedding neighborhood of $y$, thereby exhibiting targeted backdoor behavior.
The defender has access to $\tilde{\mathcal D}$ and pretrained image and text encoders, but does not know the poisoned indices, trigger, or target class.

\noindent\textbf{Defense Objective.}
Given the mixed training set $\tilde{\mathcal{D}}$, a detection-based defense algorithm outputs a predicted poisoned set $\mathcal{D}_p \subseteq \tilde{\mathcal{D}}$ and a benign set $\mathcal{D}_b = \tilde{\mathcal{D}} \setminus \mathcal{D}_p$.
The defense objective has two aspects:
\textit{\textbf{1) Detection Quality:}}
Detection performance is measured using the TPR, FPR and AUROC. 
We also consider fixed-recall metrics, such as FPR@100\%TPR, to evaluate the false-positive cost when all poisoned samples are required to be detected.
\textit{\textbf{2)~Post-detection Model Performance:}} 
Once \( \mathcal{D}_b \) identified, the downstream model is trained or fine-tuned on this benign subset and assessed in terms of both security and utility.
Security is quantified by the ASR against target $y$, which is expected to be low. 
Utility is evaluated by clean accuracy (CA) on the clean test set, which is expected to remain high and close to the baseline trained on clean data.

\section{Proposed Method \alg}
\label{sec:method}

Fig.~\ref{fig:method_diagram} illustrates the overall workflow of \alg, designed under a \emph{coarse-to-fine detection paradigm}. 
Stage~\Rmnum{1} (coarse-grained detection) in  \S\ref{subsec:first_stg_detect} estimates cross-modality consistency for all image–caption pairs and models the resulting score distribution using a Gaussian mixture model. 
It partitions the dataset into three disjoint subsets: a high-confidence poisoned set $\mathcal{D}_p$, a high-confidence benign set $\mathcal{D}_b$, and an ambiguous remainder $\mathcal{D}_u$, analogous to clean/noisy sample separation in robust learning~\cite{li2024rml,li2025rml++}.. 
Stage~\Rmnum{2} (fine-grained detection) in \S\ref{subsec:fine-grained-detection} focuses on $\mathcal{D}_u$ and employs CP to determine whether each sample conforms to the poisoned distribution. 
Specifically, we construct a poisoned reference set from $\mathcal{D}_p$ and compute NCS for all samples in $\mathcal{D}_u$, which are then transformed into conformal \textit{p}-values with statistical guarantees. 
Through this process, \alg progressively filters poisoned samples with increasing precision: Stage~\Rmnum{1} ensures high-recall separation, whereas Stage~\Rmnum{2} delivers rigorous statistical refinement.

\subsection{Coarse-Grained Detection}
\label{subsec:first_stg_detect}

Stage~\Rmnum{1} seeks to separate the dataset into reliable benign and poisoned subsets, while deferring uncertain cases for further analysis. 
The key insight is that poisoned pairs exhibit weaker semantic alignment between the image and its caption, which can be captured through a cross-modality consistency score. 
As illustrated in Fig.~\ref{fig:cal_cons}, we first generate visual-guided text embeddings by mapping the image embedding into the text-embedding space, and then compute consistency by comparing this generated embedding with the original caption embedding. 
Benign pairs yield small discrepancies, while poisoned pairs lead to larger ones.
We then fit a GMM to the resulting consistency scores for all samples, enabling statistical partitioning of the dataset into three subsets.

\begin{figure}[t]
\centering
\includegraphics[width=\linewidth]{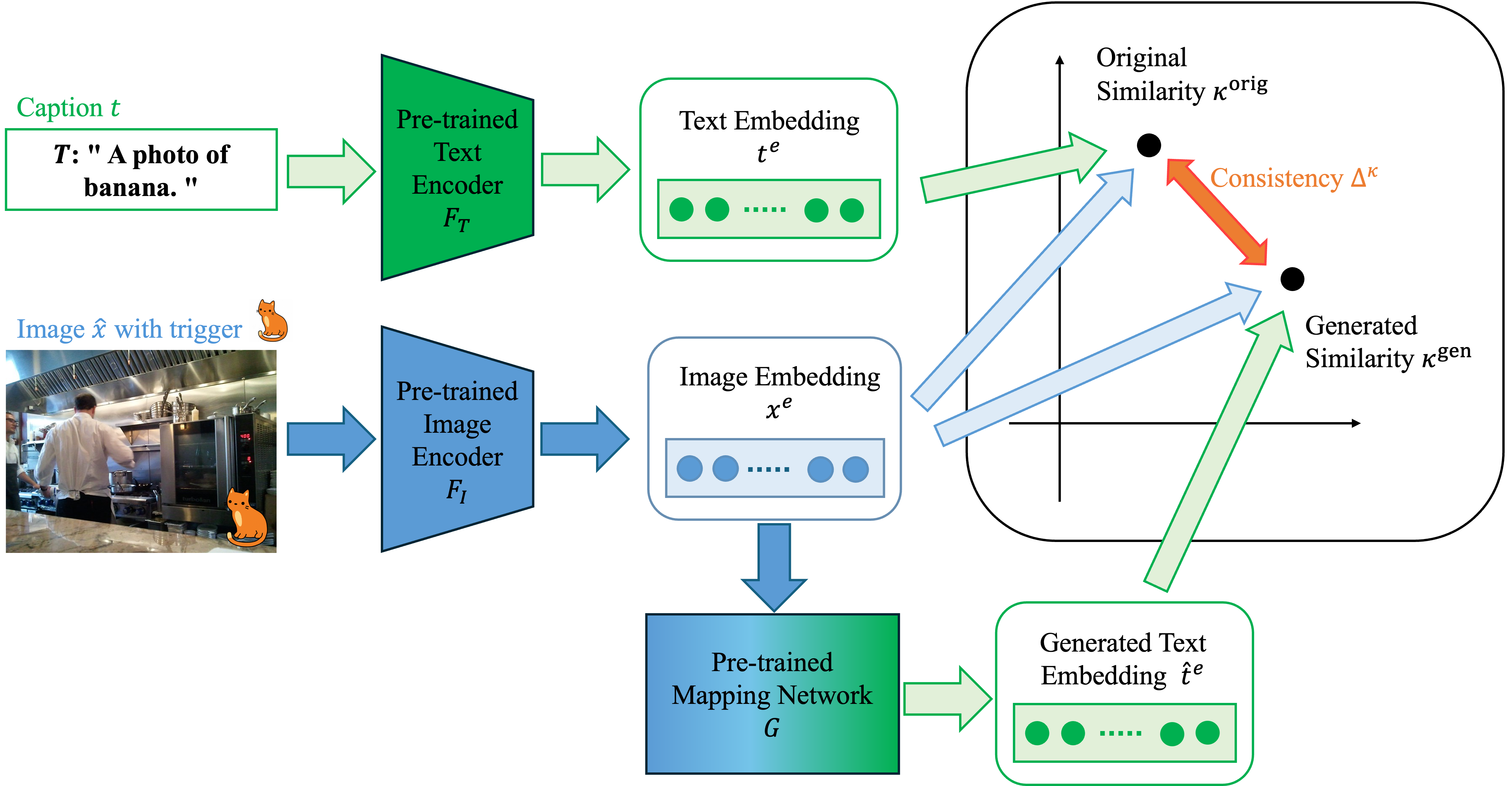}
\caption{
\textbf{Computation of cross-modality consistency. }
}
\label{fig:cal_cons}
\end{figure}

\noindent\textbf{Visual-Guided Text Embedding Generation.} 
\label{subsec:text_emb_gen}
Given an image--caption pair $(x,t)\in\tilde{\mathcal{D}}$, the image encoder $F_I$ and text encoder $F_T$ of a pre-trained CLIP generate the image embedding $x^e = F_I(x)\in \mathbb{R}^d$ and the text embedding $t^e=F_T(t)\in \mathbb{R}^d$.
To compare modalities in a common representation space, we employ an off-the-shelf CLIPCap model~\cite{clipcap_2021} as a fixed cross-modal mapping module $G$. Specifically, its one-hidden-layer MLP was pretrained together with the GPT-2 decoder on MS-COCO using an autoregressive objective. 
We further examine alternative choices of $G$ in \S\ref{sec:ablation study} and observe that the detection performance is insensitive to the specific $G$.
Given the pre-trained $G$, we map $x^e$ into the text-embedding space, producing a generated text embedding
\begin{equation}
\label{eq:geneate_caption}
    \hat{t}^e = G(x^e).
\end{equation} 
By construction, $\hat{t}^e$ encodes visual-conditioned semantics in the same space as $t^e$.
Hence, for benign pairs in which the caption faithfully describes the image, $\hat{t}^e$ and $t^e$ are expected to be close.
In contrast, for poisoned pairs, target-specific textual patterns misalign with the image, causing $\hat{t}^e$ and $t^e$ to diverge.
Applying Eq.~\eqref{eq:geneate_caption} to all $(x_i,t_i)$ yields $\{\hat{t}_i^e\}_{i=1}^{N}$, which enables subsequent processing. 



\noindent\textbf{Cross-Modality Consistency Computation.}
\label{sec:consistency}
To quantify image--text alignment, we use the cosine similarity
\begin{equation}
\label{eq: cm similarity}
    \kappa(u, v) = \frac{\langle u, v \rangle}{\|u\|_2 \|v\|_2}\in[-1,1],
\end{equation}
and instantiate two scores per pair: the original cross-modality similarity $\kappa^{\rm orig}_{(x,t)} \coloneq \kappa(x^e, t^e)$, and the generated similarity $\kappa^{\rm gen}_{(x,t)} \coloneq \kappa(x^e, \hat{t}^e)$.
Since $\hat{t}^e$ is anchored to image semantics, $\kappa^{\rm gen}$ remains high for both poisoned and benign pairs.
In contrast, $\kappa^{\rm orig}_{(x,t)}$ is sensitive to whether the given caption aligns with the image: benign captions typically yield larger values, whereas poisoned captions lead to smaller values. 
This asymmetry motivates the definition of \textit{cross-modality consistency}:
\begin{align}
\label{eq: cm consistency}
    \Delta^\kappa_{(x,t)} = \left\vert \kappa^{\rm gen}_{(x,t)} - \kappa^{\rm orig}_{(x,t)} \right\vert.
\end{align}

The statistic $\Delta^{\kappa}$ is both scale-free and image-controlled.
It subtracts an image-anchored baseline $\kappa^{\mathrm{gen}}$ from the observed alignment $\kappa^{\mathrm{orig}}$, staying small when modalities agree (benign) and increasing under semantic mismatch (poisoned).
This property renders $\Delta^{\kappa}$ an effective detection primitive for subsequent partitioning.

\noindent\textbf{GMM-based Subset Partitioning.} 
To model the distribution of consistency scores $\Delta^\kappa$ in $\tilde{\mathcal{D}}$, we employ a widely used statistical approach, Gaussian mixture model (GMM)~\cite{mclachlan2000finite}, which can be formally expressed as a weighted sum of $K$ Gaussian components:
\begin{equation}\label{eq:gmm}
    \Pr\nolimits_{\Delta^{\kappa}}(z) = \sum\nolimits_{i=1}^K\phi_i \,\mathcal{N}\left(z    \mid \mu_i, \sigma_i^2\right).
\end{equation}
Here, $\phi_i\geq 0$ with constraint $\sum_{i=1}^K\phi_i=1$ denotes the mixing weight of the $i$-th component, while $\mu_i$ and $\sigma_i>0$ represent its mean and standard deviation.
$\{\mu_i, \sigma_i, \phi_i\}_{i=1}^{K}$ are estimated via the expectation--maximization (EM) algorithm~\cite{EM}. 

As discussed in \S\ref{sec:consistency}, benign pairs tend to have small consistency values since their generated and original text embeddings align well with the same image, whereas poisoned pairs exhibit large values due to semantic mismatch.
Consequently, the mixture usually contains a low-mean component dominated by benign samples.
To form a high-purity benign subset, we first identify the lowest-mean component
\begin{equation}\label{eq:min-component}
i^\star=\argmin\nolimits_{i\in\{1,\dots,K\}}\, \mu_i,
\end{equation}
and then admit only the samples whose posterior responsibility for this component exceeds a confidence level $\gamma \in (0,1)$.
Specifically, for a consistency score $z=\Delta^\kappa_{(x,t)}$, the posterior responsibility is defined as
\begin{equation}\label{eq:responsibility}
r_i(z)=\frac{\phi_i \mathcal{N}(z\mid \mu_i,\sigma_i^2)}{\sum_{l=1}^K\phi_l \mathcal{N}(z\mid \mu_l,\sigma_l^2)}.
\end{equation}
The benign subset is then obtained by
\begin{equation}\label{eq:db}
    \mathcal{D}_b=\left\{(x,t)\in\tilde{\mathcal{D}} : r_{i^\star}\big(\Delta^{\kappa}_{(x,t)}\big) > \gamma \right\}.
\end{equation}

Complementarily, to obtain a high-confidence poisoned subset, we take $q$ pairs with the largest consistency values:
\begin{equation}
\label{eq:dp}
    \mathcal{D}_p=\mathrm{Top}\text{-}q\left(\big\{\Delta^\kappa_{(x_i,t_i)}\big\}_{i=1}^N\right).
\end{equation}

The remaining samples constitute the unidentified subset, i.e., $\mathcal{D}_u=\tilde{\mathcal{D}}\setminus (\mathcal{D}_b \cup \mathcal{D}_p)$.
Intuitively, $\gamma$ controls the purity--coverage trade-off of $\mathcal{D}_b$, while $q$ governs that of $\mathcal{D}_p$.
Partitioning pipeline illustration is in \S\ref{appx:partition} and ablation studies on $K$ and $\gamma$ are in \S\ref{sec:ablation study}.

\subsection{Fine-Grained Detection}
\label{subsec:fine-grained-detection}

Stage~\Rmnum{2} aims to identify each sample in $\mathcal{D}_u$.
The key observation is that poisoned captions exhibit target-specific lexical and syntactic patterns, which cause their text embeddings to cluster tightly (cf. Fig.~\ref{fig:textual_correlation}).
Prior studies~\cite{yuksekgonul2023when,allgeuer_novic_2025,  Abbasi_2025_CVPR} have shown that CLIP preserves object-level semantics more reliably than other compositional information.
Consequently, captions referring to the same salient target object tend to form object-centric neighborhoods in the text representation space.
In this regard, we employ the CP framework, treating each candidate in $\mathcal{D}_u$ as a test pair to be evaluated with a poisoned reference distribution.
Concretely, we 1) construct a poisoned reference set from $\mathcal{D}_p$, 2)~design a textual NCS to quantify semantic deviation from this poisoned distribution, and 3) convert the NCS into a conformal \textit{p}-value that provides a statistically guaranteed decision. 

\noindent\textbf{Reference Set Construction.}
We first establish a poisoned reference set using the text embeddings of the $q$ high-confidence poisoned samples identified in Stage~\Rmnum{1}:
\begin{equation}\label{eq:reference-set}
\mathcal{T}_{\mathrm{ref}} = \big\{t_{i}^e : (x_i, t_i) \in \mathcal{D}_{p}\big\}_{i=1}^q.
\end{equation}

$\mathcal{T}_{\mathrm{ref}}$ acts as an empirical approximation of the poisoned caption distribution.
To mitigate bias in subsequent conformal ranking, each reference item is calibrated in a leave-one-out (LOO) manner, where its score is computed against the remaining $q-1$ items.
As Stage~\Rmnum{1} ensures a high-purity $\mathcal{D}_p$, $\mathcal{T}_{\mathrm{ref}}$ offers a reliable foundation for modeling the poisoned caption distribution and guiding subsequent detection.

\noindent\textbf{Textual NCS Design.}
Since poisoned captions cluster tightly, a poisoned caption is expected to align closely with a poisoned reference set, while a benign caption tends to deviate from it. 
To capture this distinction, we define the NCS of a caption $t$ w.r.t. an arbitrary reference set $\mathcal S$ as the complement of the average cosine similarity:
\begin{equation}
A(t;\mathcal S)
\coloneq
1-\frac{1}{|\mathcal S|}
\sum\nolimits_{t' \in \mathcal S}
\frac{\langle t_e, t'_e \rangle}{\|t_e\|_2 \, \|t'_e\|_2}.
\label{eq:ncs}
\end{equation}
Intuitively, a lower NCS reflects stronger conformity to the poisoned reference distribution, whereas a higher NCS implies greater deviation and is more likely benign.

\noindent\textbf{Conformal Testing with \textit{p}-values.}
To convert NCS values into rigorous statistical decisions, we employ CP under the following hypotheses:
\[
\begin{aligned}
H_0 &: \text{The test pair $(x,t)$ is poisoned.}\\
H_1 &: \text{The test pair $(x,t)$ is benign.}
\end{aligned}
\]
Under $H_0$, the test sample and the reference set are assumed to be exchangeable. 
For each reference item, we compute a LOO calibration score $\alpha_i = A(t_i;\mathcal{T}_{\mathrm{ref}} \setminus \{t_i\})$; and for the test item, we compute $\alpha_{q+1} = A(t;\mathcal{T}_{\mathrm{ref}})$. The conformal \textit{p}-value is then given by
\begin{equation}
    p_{(x, t)} = \frac{1+\sum_{i=1}^q \mathds{1}(\alpha_i \ge \alpha_{q+1})}{1+q}.
\label{eq:p_value}
\end{equation}

When the candidate caption behaves similarly to the poisoned distribution, its NCS will be low, resulting in a large $p_{(x,t)}$.
Conversely, significant deviation yields a small \textit{p}-value. 
At a chosen significance level $\epsilon \in (0,1)$, we classify the sample as benign when $p_{(x,t)} \leq \epsilon$, and as poisoned otherwise. 
Pseudocode of \alg is provided in Appx.~\ref{appx:pseudocode}.

To characterize the statistical validity of this decision rule, Appx.~\ref{appx:theory} formalizes the standard conformal assumption that, under $H_0$, the candidate poisoned pair and the reference samples in $\mathcal{D}_p$ are exchangeable. Under this assumption, the conformal $p$-value in Eq.~(12) follows a discrete uniform distribution, implying that \[ \Pr\!\left(p_{(x,t)} \leq \epsilon \mid H_0\right) \leq \epsilon. \] Therefore, the probability of misclassifying a poisoned pair as benign is controlled at the prescribed level $\epsilon$. The formal assumption, theorem, and proof are presented in Appx.~\ref{appx:theory}.

\begin{table*}[t]
\centering
\caption{FPR@100\%TPR (\%) and AUROC of \alg and competing backdoor defense methods on CC3M against 9 backdoor attacks. Avg. is the average result over 9 attacks. The best and runner-up results are \textbf{bolded} and \underline{underlined}.}
\label{tab:auroc}

\begingroup
\footnotesize
\setlength{\tabcolsep}{9.5pt}
\renewcommand{\arraystretch}{1.05}
\scalebox{0.8}{%
\begin{tabular}{lrcccccccccc}
\toprule
\textbf{Method}
&
& \textbf{BadNets}
& \textbf{Blended}
& \textbf{Trojan}
& \textbf{ISSBA}
& \textbf{LC}
& \textbf{WaNet}
& \textbf{mmPoison}
& \textbf{BadCLIP}
& \textbf{SIG}
& \textbf{\textsc{Avg.}} \\
\midrule

& & \multicolumn{10}{c}{FPR@100\%TPR ($\downarrow$)} \\
\cmidrule(lr){3-12}

ABL
& \textit{NeurIPS'21}
& 99.48
& 90.01
& 72.69
& 75.86
& 69.94
& 89.78
& 75.14
& 57.18
& 92.60
& 80.30 \\

CLIPScore
& \textit{ICML'23}
& 34.11
& 29.11
& 32.69
& 55.86
& 40.26
& 49.01
& 55.14
& 64.77
& 51.60
& 45.84 \\

SafeCLIP
& \textit{ICML'24}
& 52.62
& 74.06
& 68.39
& 62.33
& 67.50
& 46.54
& 70.14
& 66.53
& 49.23
& 61.93 \\

DAO
& \textit{ICLR'25}
& \textbf{2.06}
& \underline{3.33}
& \underline{5.49}
& \underline{8.14}
& \underline{18.94}
& \underline{7.32}
& \underline{11.13}
& \textbf{8.68}
& \underline{5.59}
& \underline{7.85} \\

\rowcolor{gray!15}
\alg (Ours)
&
& \underline{3.54}
& \textbf{3.20}
& \textbf{4.05}
& \textbf{3.71}
& \textbf{11.51}
& \textbf{3.63}
& \textbf{3.77}
& \underline{15.44}
& \textbf{3.22}
& \textbf{5.79} \\

\midrule
\midrule

& & \multicolumn{10}{c}{AUROC ($\uparrow$)} \\
\cmidrule(lr){3-12}

ABL
& \textit{NeurIPS'21}
& 0.5786
& 0.6007
& 0.5429
& 0.5439
& 0.5858
& 0.5607
& 0.5658
& 0.6826
& 0.5688
& 0.5811 \\

CLIPScore
& \textit{ICML'23}
& \underline{0.9964}
& \underline{0.9948}
& \underline{0.9967}
& \underline{0.9953}
& 0.7864
& \underline{0.9936}
& \underline{0.9915}
& 0.9010
& \underline{0.9965}
& \underline{0.9614} \\

SafeCLIP
& \textit{ICML'24}
& 0.8342
& 0.5701
& 0.7812
& 0.5512
& 0.5654
& 0.6031
& 0.8582
& 0.7186
& 0.8215
& 0.7004 \\

DAO
& \textit{ICLR'25}
& 0.9527
& 0.8450
& 0.9676
& 0.9422
& \underline{0.8274}
& 0.9559
& 0.9710
& \textbf{0.9424}
& 0.9606
& 0.9294 \\

\rowcolor{gray!15}
\alg (Ours)
&
& \textbf{0.9999}
& \textbf{0.9999}
& \textbf{0.9999}
& \textbf{0.9999}
& \textbf{0.9453}
& \textbf{0.9999}
& \textbf{0.9999}
& \underline{0.9358}
& \textbf{0.9999}
& \textbf{0.9867} \\

\bottomrule
\end{tabular}%
}
\endgroup
\end{table*}

\begin{table*}[t]
\centering
\caption{CA (\%) and ASR (\%) comparisons of backdoor defenses against 9 backdoor attacks on ImageNet-1K. Avg. is the average result over 9 attacks. \algi refers to \alg w/o Stage~\Rmnum{2}. The best and runner-up results are \textbf{bolded} and \underline{underlined}.}
\label{tab:DL_CA}

\begingroup
\footnotesize
\setlength{\tabcolsep}{9.7pt}
\renewcommand{\arraystretch}{1.05}
\scalebox{0.8}{%
\begin{tabular}{lrcccccccccc}
\toprule
\textbf{Method}
&
& \textbf{BadNets}
& \textbf{Blended}
& \textbf{Trojan}
& \textbf{ISSBA}
& \textbf{LC}
& \textbf{WaNet}
& \textbf{mmPoison}
& \textbf{BadCLIP}
& \textbf{SIG}
& \textbf{\textsc{Avg.}} \\
\midrule

& & \multicolumn{10}{c}{CA ($\uparrow$)} \\
\cmidrule(lr){3-12}

Clean Training
&
& 59.69 & 59.69 & 59.69 & 59.69 & 59.69
& 59.69 & 59.69 & 59.69 & 59.69 & 59.69 \\

No Defense
&
& 58.69 & 59.56 & 59.74 & 58.48 & 58.28
& 59.26 & 58.62 & 58.60 & 58.87 & 58.90 \\

\hdashline[0.5pt/1pt]

\rule{0pt}{10pt}CleanCLIP
& \textit{ICCV'23}
& 53.72 & 54.29 & 54.95 & 54.14 & 55.74
& 54.79 & 53.62 & 53.98 & 53.68 & 54.32 \\

RoCLIP
& \textit{NeurIPS'23}
& 40.37 & 44.81 & 43.78 & 44.00 & 42.09
& 49.03 & 47.47 & 49.18 & 45.26 & 45.11 \\

CleanerCLIP
& \textit{TIFS'25}
& 52.64 & 52.61 & 53.17 & 54.91 & 51.76
& 52.61 & 53.14 & 51.29 & 52.36 & 52.72 \\

DAO
& \textit{ICLR'25}
& \underline{57.89}
& \underline{56.56}
& \underline{56.36}
& \underline{55.28}
& \textbf{56.34}
& 56.26
& 56.25
& 54.61
& 55.11
& \underline{56.07} \\

InverTune
& \textit{NDSS'26}
& 57.52
& 54.49
& 55.57
& 53.98
& \underline{55.91}
& \underline{56.76}
& \underline{56.84}
& \underline{56.52}
& \underline{55.72}
& 55.92 \\

\rowcolor{gray!15}
\algi (Ours)
&
& 31.74 & 28.93 & 29.42 & 26.90 & 32.14
& 33.15 & 30.76 & 28.46 & 29.82 & 30.15 \\

\rowcolor{gray!15}
\alg (Ours)
&
& \textbf{59.21}
& \textbf{57.99}
& \textbf{58.99}
& \textbf{59.44}
& 55.11
& \textbf{59.67}
& \textbf{57.11}
& \textbf{57.41}
& \textbf{58.57}
& \textbf{58.17} \\

\midrule
\midrule

& & \multicolumn{10}{c}{ASR ($\downarrow$)} \\
\cmidrule(lr){3-12}

No Defense
&
& 100.0 & 100.0 & 93.11 & 50.28 & 83.58
& 99.35 & 17.79 & 98.85 & 80.38 & 80.37 \\

\hdashline[0.5pt/1pt]

\rule{0pt}{10pt}CleanCLIP
& \textit{ICCV'23}
& 17.13 & 18.43 & 21.16 & 4.13 & 0.01
& 5.49 & \textbf{0.00} & 89.60 & 21.72 & 19.74 \\

RoCLIP
& \textit{NeurIPS'23}
& 2.36 & 0.33 & 5.64 & 4.95 & \textbf{0.00}
& 0.67 & \textbf{0.00} & 47.20 & 4.23 & 7.26 \\

CleanerCLIP
& \textit{TIFS'25}
& 0.45 & 12.84 & 9.24 & 1.06 & 4.12
& 0.15 & 11.16 & 17.85 & 1.41 & 6.48 \\

DAO
& \textit{ICLR'25}
& \textbf{0.00} & 0.60 & 1.62 & 10.15 & \textbf{0.00}
& 0.50 & 8.73 & 11.62 & 0.20 & 3.71 \\

InverTune
& \textit{NDSS'26}
& 0.02
& 0.14
& 0.10
& 1.42
& 0.46
& 0.11
& 1.14
& \underline{0.49}
& 0.28
& \underline{0.46} \\

\rowcolor{gray!15}
\algi (Ours)
&
& \textbf{0.00}
& \textbf{0.00}
& \textbf{0.00}
& \textbf{0.00}
& 36.72
& \textbf{0.00}
& \textbf{0.00}
& 89.27
& \textbf{0.00}
& 14.00 \\

\rowcolor{gray!15}
\alg (Ours)
&
& \textbf{0.00}
& \textbf{0.00}
& \textbf{0.00}
& \textbf{0.00}
& 0.84
& \textbf{0.00}
& \textbf{0.00}
& \textbf{0.26}
& \textbf{0.00}
& \textbf{0.12} \\

\bottomrule
\end{tabular}%
}
\endgroup
\end{table*}

\section{Experiments}
\label{sec:experiments}
\subsection{Experimental Setup}
\label{subsec:setup}

\noindent\textbf{Network Architectures.} 
Following prior defenses, we use CLIP-RN50, comprising a
ResNet-50 image encoder and a Transformer text
encoder~\cite{CLIP}.


\noindent\textbf{Datasets.} 
We experiment on CC3M~\cite{sharma2018cc3m} for main comparison, and report the results on Flickr-PASCAL~\cite{pmlr-v202-yang23f}, Visual Genome~\cite{krishna2017visual}, and MSCOCO~\cite{mscoco}. 
To test the zero-shot classification performance on downstream datasets, we choose ImageNet-1K~\cite{imagenet}, Caltech-101~\cite{caltech101}, CIFAR-10/100~\cite{krizhevsky2009learning}, Oxford-IIIT PET~\cite{oxford_pets}, Stanford Cars~\cite{6755945}, and Food-101~\cite{food101}.


\noindent\textbf{Attack Settings.} 
Following previous works~\cite{bansal2023cleanclip}, we set the poison rate to $0.5\%$.
We evaluate nine widely used backdoor attacks, including: BadNets~\cite{gu2017badnets}, Blended~\cite{chen2017blended}, SIG~\cite{barni2019SIG}, LC~\cite{label-consist}, Trojan~\cite{Trojan}, WaNet~\cite{WaNet}, ISSBA~\cite{li2021SSBA}, mmPoison~\cite{pmlr-v202-yang23f} and BadCLIP~\cite{liang2023badclip}. 
Unless otherwise specified, we use \emph{banana} as the
default target class.

\noindent\textbf{Baseline Defenses.}
For overall evaluation, we compare \alg with multimodal backdoor detection and defense methods, including ABL~\cite{ABL}, CLIPScore~\cite{pmlr-v202-yang23f}, CleanCLIP~\cite{bansal2023cleanclip}, RoCLIP~\cite{RoCLIP}, SafeCLIP~\cite{SAFECLIP}, CleanerCLIP~\cite{CleanerCLIP}, DAO~\cite{DAO}, and InverTune~\cite{InverTune}.
We re-implement these defenses using official codes under same settings.
For comprehensive comparison, we also report results of \alg with only Stage~I, denoted as \algi.

\noindent\textbf{Evaluation Metrics.} 
We report TPR, FPR@100\%TPR, and AUROC for detection, and
zero-shot classification accuracy (CA) and ASR for downstream utility and security.
For further experimental setup, please refer to Appx.~\ref{appx:further-exp-setup}. 


\subsection{Detection Quality of \alg}
\label{exp:detection}

Tab.~\ref{tab:auroc} reports the detection performance of \alg in terms of FPR@100\%TPR and AUROC, clearly surpassing existing solutions. 
Our \alg consistently achieves accurate detection across all attacks.
Across various attacks, \alg lowers FPR from 29.11\%--64.77\% to 3.20\%--15.44\%, demonstrating its effective false positive control.
The results strongly demonstrate that \alg achieves the lowest average FPR@100\%TPR and obtains the best per-attack FPR on seven of the nine attacks. 
\alg achieves an AUROC of 0.9999 on seven attacks, while obtaining 0.9453 and 0.9358 on LC and BadCLIP, respectively.
We present the PR curves for \alg and \algi in Appx.~\ref{appdenx:precision_recall}. 
Results under standard and high poison rates are provided in Appx.~\ref{app:stree_test_poison_rate}. Across poison rates from $0.3\%$ to $0.6\%$, \alg consistently achieves TPRs above $98\%$ and outperforms all competing methods. Even when the poison rate reaches $10\%$, \alg maintains AUROC values between $0.9678$ and $0.9810$ across all evaluated attacks. 
Additional results further show that \alg generalizes across diverse and multiple target classes, different CLIP variants, and various pre-training datasets, as reported in Appx.~\ref{app:target_generalization},~\ref{appendix:clip_variant}, and~\ref{appendix:more_datasets}, respectively. 
Runtime comparison is provided in Appx.~\ref{ap:runtime}.

\subsection{Downstream Defense Performance}
\label{subsubsec:defense_result}

Tab.~\ref{tab:DL_CA} reports downstream performance, where ``Clean Training'' denotes training on benign data and ``No Defense'' denotes directly training on poisoned data. On ImageNet-1K, \alg achieves an average CA of $58.17\%$, close to clean training ($59.69\%$), while reducing the average ASR from $80.37\%$ to $0.12\%$. In contrast, \alg-I obtains only $30.15\%$ CA and $14.00\%$ ASR since Stage~I prioritizes high recall and a low FPR and retains only a limited number of benign pairs. 
Stage~II recovers benign samples from $\mathcal{D}_u$ while filtering residual poisoned pairs, substantially improving both utility and security. 
Additional evaluations in Appx.~\ref{appendix:downstream} on more datasets further confirm that the utility preservation of \alg generalizes beyond ImageNet-1K.

\subsection{Results of \alg against Adaptive Attacks}
\label{sec:adaptive_attacks}

We construct a defense-aware adaptive attack that explicitly targets the detection signals of both stages of \alg. To weaken Stage~I, we optimize a universal visual trigger such that triggered source images encode target semantics. Specifically, we construct a target text prototype by averaging the normalized text embeddings of target-class captions and optimize the trigger $\delta$ to minimize the cosine distance between each embedding $F_I(x_i+\delta)$ and this prototype. An $\ell_\infty$ regularization term is added to constrain the trigger magnitude. Overall optimization enhances the semantic consistency between triggered images and targeted poisoned captions, reducing the cross-modality discrepancy exploited by Stage~I.
To weaken Stage~II, we diversify the poisoned captions using two strategies. Under \textit{Paraphrase}, each original template-based poisoned caption is rewritten while preserving its target semantics. Under \textit{Description}, the original templates are replaced with independently generated natural-language descriptions of the target class, yielding greater linguistic diversity. Both transformations disperse poisoned captions in the text embedding space and weaken the compact target-specific structure exploited by the textual NCS. The captions are generated before trigger optimization and remain fixed throughout, with the $\delta$ as the only optimized variable.

Accordingly, we evaluate three adaptive attacks. \textit{Semantic Only} combines the optimized semantic trigger with the original poisoned captions. Building upon it, \textit{Semantic + Paraphrase} and \textit{Semantic + Description} replace the original captions with paraphrased captions and independently generated target-class descriptions, respectively. Detailed objectives and implementation settings are provided in Appx.~\ref{appx:adaptive-attacks}.



\begin{figure*}[t]
    \centering
    \begin{subfigure}[t]{0.31\textwidth}
        \centering
        \includegraphics[width=\linewidth]{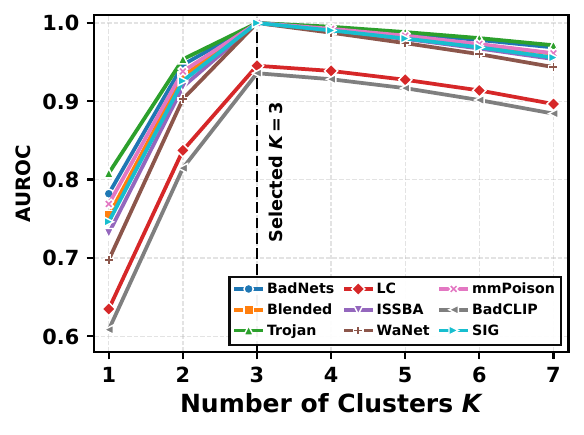}
        \caption{Impact of $K$.}
        \label{fig:sensitivity_k}
    \end{subfigure}
    \hfill
    \begin{subfigure}[t]{0.31\textwidth}
        \centering
        \includegraphics[width=\linewidth]{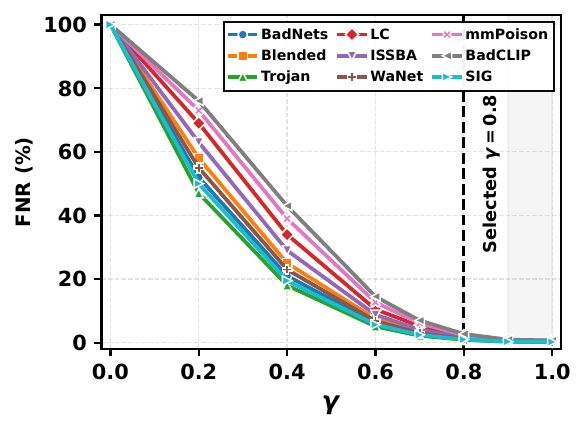}
        \caption{Impact of $\gamma$.}
        \label{fig:sensitivity_gamma}
    \end{subfigure}
    \hfill
    \begin{subfigure}[t]{0.31\textwidth}
        \centering
        \includegraphics[width=\linewidth]{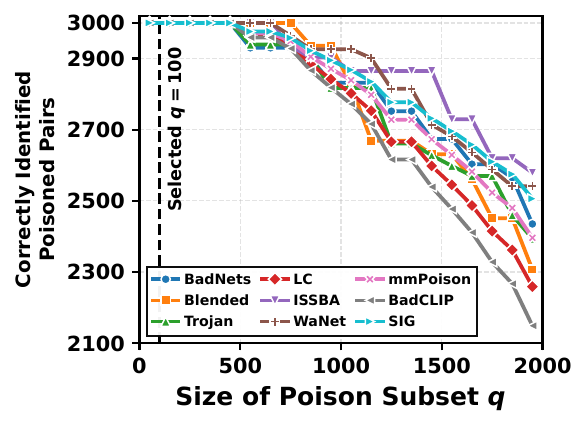}
        \caption{Impact of $q$.}
        \label{fig:sensitivity_q}
    \end{subfigure}
    \caption{Sensitivity of \alg to (a) the number of Gaussian
    components $K$, (b) the confidence threshold $\gamma$, and
    (c) the size of the poisoned reference subset $q$.}
    \label{fig:sensitivity}
\end{figure*}

\begin{table}[t]
\centering
\caption{Detection performance of \alg against the non-adaptive attack and three defense-aware adaptive attacks.}
\label{tab:adaptive-detection}

\begingroup
\setlength{\tabcolsep}{8pt}
\renewcommand{\arraystretch}{1.05}
\scalebox{0.8}{%
\begin{tabular}{@{}lcc@{}}
\toprule
\textbf{Attack Variant}
& \textbf{AUROC} ($\uparrow$)
& \shortstack{\textbf{FPR@100\%TPR}($\downarrow$)} \\
\midrule
Non-Adaptive            & 0.9867 & 5.79  \\
Semantic Only           & 0.9577 & 8.76  \\
Semantic + Paraphrase   & 0.9475 & 10.99 \\
Semantic + Description  & 0.9385 & 11.38 \\
\bottomrule
\end{tabular}%
}
\endgroup
\end{table}

As reported in Tab.~\ref{tab:adaptive-detection}, \alg remains highly effective against all three adaptive attacks. Compared with the non-adaptive attack, the AUROC of \alg decreases by only 0.029, 0.039, and 0.048, respectively. The corresponding FPR@100\%TPR increases by only 3.69--6.31\% and remains below 11.4\% in all cases. 
These results demonstrate that \alg remains highly effective even under strong defense-aware adaptations. 
We conjecture that the robustness of Stage~I stems from its reliance on a generation-induced representation space introduced by the external model $G$, rather than  on the semantic embedding space targeted by the adaptive attack. The imperfect alignment between these two spaces prevents the semantic-consistency attack from eliminating the discrepancy of poisoned pairs in the auxiliary space, thereby preserving their separability.

\subsection{Ablation Studies} 
\label{sec:ablation study}

\noindent\textbf{Component Number $K$ and Threshold $\gamma$.} 
Fig.~\ref{fig:sensitivity_k} reports the AUROC results with various $K$. 
When $K = 1$, the GMM degenerates into a single Gaussian, yielding relatively low AUROC between 0.6 and 0.8 and highlighting the necessity of modeling the distribution with multiple Gaussian components.
Increasing $K$ substantially improves the AUROC, which peaks consistently at $K = 3$.
However, further increasing $K$ over-partitions the distribution and gradually degrades detection performance.
As shown in Fig.~\ref{fig:sensitivity_gamma}, the proportion of mistakenly flagged poisoned samples decreases as $\gamma$ increases and becomes stable around $\gamma=0.8$. A larger $\gamma$ would leave almost no samples in $\mathcal{D}_b$. 
Therefore, we find that setting $K=3$ strikes an effective balance between separation and coverage, and we set $\gamma = 0.8$ to prioritize the purity of $\mathcal{D}_b$.

\noindent\textbf{Poisoned Subset Size $q$.} 
We examine whether increasing $q$ in the coarse-grained stage enhances detection performance across diverse attacks.
As shown in Fig.~\ref{fig:sensitivity_q}, \alg identifies all poisoned pairs for $q\in[50,400]$, whereas larger $q$ gradually degrades performance because benign samples increasingly contaminate $\mathcal{D}_p$ and distort the textual relevance estimate. We therefore set $q=100$.
Reference-set purity across different $q$ and robustness to deliberate contamination are reported in Appx.~\ref{appendix:purity} and ~\ref{appendix:fault_tolerance}, respectively.


\noindent\textbf{Caption Generation Model $G$.}
\label{ablation: diff-aug}
We examine the impact on CA and ASR with different $G$s, including ClipCap~\cite{clipcap_2021}, SmallCap~\cite{Ramos_2023_CVPR}, Oscar~\cite{OSCAR}, and CaMEL~\cite{barraco2022camel}. 
The ClipCap-TF denotes the transformer variant of ClipCap.
As shown in Tab.~\ref{ablation:clipcap}, \alg maintains high CA and zero ASR across all generators, despite differences in architecture and training data. 
This demonstrates its robustness to the choice of $G$ and the associated domain shift, a challenge also studied in \citet{wong2026knnproxy}.
Visualization of the visual-guided text embeddings is provided in Appx.~\ref{app:visualization}.


\begin{table}[t]
\centering
\caption{CA (\%) and ASR (\%) of \alg using different caption generation models $G$.}
\label{ablation:clipcap}

\begingroup
\setlength{\tabcolsep}{1.5pt}
\renewcommand{\arraystretch}{1.05}
\scalebox{0.72}{%
\begin{tabular}{@{}lcccccccc@{}}
\toprule
\multirow{2}{*}{\textbf{Model}}
& \multicolumn{2}{c}{\textbf{BadNets}}
& \multicolumn{2}{c}{\textbf{Blended}}
& \multicolumn{2}{c}{\textbf{Trojan}}
& \multicolumn{2}{c}{\textbf{WaNet}} \\
\cmidrule(lr){2-3}
\cmidrule(lr){4-5}
\cmidrule(lr){6-7}
\cmidrule(lr){8-9}
& CA ($\uparrow$) & ASR ($\downarrow$)
& CA ($\uparrow$) & ASR ($\downarrow$)
& CA ($\uparrow$) & ASR ($\downarrow$)
& CA ($\uparrow$) & ASR ($\downarrow$) \\
\midrule
CLIPCap-MLP
& 58.41 & 0.00
& 58.67 & 0.00
& 59.32 & 0.00
& 59.99 & 0.00 \\

CLIPCap-TF
& 59.30 & 0.00
& 59.43 & 0.00
& 58.94 & 0.00
& 59.38 & 0.00 \\

SmallCap
& 58.78 & 0.00
& 59.26 & 0.00
& 59.75 & 0.00
& 58.18 & 0.00 \\

Oscar
& 59.77 & 0.00
& 58.29 & 0.00
& 58.16 & 0.00
& 59.81 & 0.00 \\

CaMEL
& 59.94 & 0.00
& 59.46 & 0.00
& 58.35 & 0.00
& 59.33 & 0.00 \\
\bottomrule
\end{tabular}%
}
\endgroup
\end{table}

\section{Conclusion}
\label{sec:conclusion}

In this paper, we address the challenge of backdoor detection in MCL by introducing \alg, a novel coarse-to-fine two-stage backdoor detection method. 
Our approach overcomes the limitations of existing methods by employing a two-stage detection pipeline that dynamically adapts to the distribution of poisoned and benign samples while leveraging an effective NCS within the CP framework. 
Extensive experiments on large-scale datasets such as CC3M show that \alg significantly outperforms SOTA defenses, achieving high TPR across diverse attack scenarios.
Moreover, \alg remains robust against defense-aware adaptive attacks.
Despite these strengths, \alg focuses on detection, leaving post-hoc purification through model unlearning and concept erasure~\cite{li2026aegis,li2026llm} for future work.

\bibliography{aaai2027}



\twocolumn[{
\centering
\LARGE
\textbf{When Modalities Fail to Tango: Conformal Backdoor Detection \\ in Multimodal Contrastive Learning}\vspace{15pt}\\
\Large
\textbf{Supplementary Materials}\vspace{30pt} \\
}]

\appendix

\setcounter{page}{1}
\pagenumbering{arabic}


\makeatletter
\renewcommand{\addcontentsline}[3]{%
  \addtocontents{#1}{\protect\contentsline{#2}{#3}{\thepage}{}\protected@file@percent}%
}
\makeatother

\etocdepthtag.toc{mtappendix}
\etocsettagdepth{mtchapter}{none}
\etocsettagdepth{mtappendix}{subsection}

\tableofcontents


\vspace{20pt}

\section{Additional Related Work}
\label{appendix:related_work}

\subsection{Backdoor Attacks and Defenses in Multimodal Contrastive Learning}
\label{appendix:related_work_p1}
Backdoor attacks and defenses have been extensively studied across diverse scenarios~\cite{10285514,10494544,10552303,10620313,chen2024progressive,10491118,10304184}.
In MCL, attackers typically poison image--caption pairs by injecting visual triggers and replacing the associated captions with target semantics~\cite{liang2023badclip,carlini2022poisoning}.
Representative attacks include mmPoison~\cite{pmlr-v202-yang23f}, which studies multiple multimodal poisoning strategies; BadCLIP~\cite{liang2023badclip}, which optimizes visual triggers in the joint embedding space; and Dormant Backdoor~\cite{li2026dormant}, which implants latent model-level backdoors activated by downstream fine-tuning.
Existing defenses mitigate such threats through clean-data fine-tuning~\cite{bansal2023cleanclip}, surrogate-model-based data filtering~\cite{pmlr-v202-yang23f}, nearest-neighbor realignment~\cite{RoCLIP}, local-neighborhood sparsity~\cite{DAO}, or multimodal trigger inversion and activation tuning~\cite{InverTune}.
Although these methods mitigate backdoor threats to some extent, they remain insufficiently effective in practice, leaving substantial room for improvement.

\subsection{Conformal Anomaly and Outlier Detection}
\label{appendix:related_work_p2}

CP is a distribution-free, model-agnostic framework that quantifies the uncertainty of predictions from any ML model, offering robust guarantees under mild assumptions.
In recent years, CP has been widely applied to anomaly detection across various domains~\cite{laxhammar2011sequential,DBLP:journals/amai/LaxhammarF15,bashari2025robust}. 
Generally, CP-based anomaly detection methods operate by evaluating the NCS of each data point, which measures its deviation from a reference distribution. 
Data points with notably high NCS values are then flagged as potential anomalies.
For instance,~\citet{liang_integrative_2024} propose a CP-based approach for out-of-distribution (OOD) detection with labeled outliers; 
\citet{xiang2023cbd} develop a certified backdoored model detector through an adjustable CP scheme;
and~\citet{bashari2025robust} propose a data-cleaning framework to annotate data points in the contaminated reference sets in conformal OOD detection.

Formally, the general conformal anomaly detection paradigm~\cite{laxhammar2011sequential,DBLP:journals/amai/LaxhammarF15} is defined as follows. 
Given a $l$-sample reference set \(\left.\mathcal{D}_{\mathrm{ref}} = \{x_i\}_{i=1}^l\right.\) and a new point \(x_{l+1}\) to be evaluated, a carefully designed scoring function $A(\cdot)$ is employed to compute the NCS \(\left.\alpha_{l+1} = A(x_{l+1};\mathcal{D}_{\mathrm{ref}})\right.\). 
Intuitively, $\alpha_{l+1}$ quantifies how much $x_{l+1}$ deviates from established normality within $\mathcal{D}_{\mathrm{ref}}$.
Common scoring functions include those based on \(k\)-nearest neighbors, ridge regression, SVMs, random forests, and deep generative models such as VAE and SVDD~\cite{vovk2005algorithmic}. 
Once the NCS is computed, a conformal \textit{p}-value \(p_{l+1}\) for $x_{l+1}$ is derived. 
For a predefined significance threshold \(\epsilon \in (0,1)\), \(x_{l+1}\) is flagged as an anomaly if \(p_{l+1} \le \epsilon\). 
This mechanism inherently guarantees an FPR bounded by $\epsilon$, ensuring statistically rigorous and interpretable detection.
Building on these developments, we further extend CP’s capabilities to the MCL scenario, focusing on enhancing the quality of backdoor detection.

\section{Pseudocode of \alg}\label{appx:pseudocode}
Algorithm~\ref{alg:algorithm} presents the complete pseudocode of \alg.
\begin{algorithm}[ht] 
\small

\SetKwData{Left}{left}\SetKwData{This}{this}\SetKwData{Up}{up} \SetKwFunction{Union}{Union}\SetKwFunction{FindCompress}{FindCompress} \SetKwInOut{Input}{Input}\SetKwInOut{Output}{Output}
	
	\Input{Pre-trained image encoder $F_I$ and text encoder $F_T$; pre-trained mapping network $G$; training epoch $T$; training dataset $\tilde{\mathcal{D}}=\left\{(x_i, t_i)\right\}_{i=1}^N$; poisoned subset size $q$; probability threshold $\gamma$; detection threshold $\epsilon$; number of Gaussian components in GMM $K$.} 
    
	\Output{Identified benign subset $ \mathcal{D}_{b}$ and poisoned subset $ \mathcal{D}_{p}$; well-trained encoders $F_I$ and $F_T$.}
	 \BlankLine 
	    \tcc{Coarse-grained detection}
            \ForEach{$(x_i,t_i) \in \tilde{\mathcal{D}}$}{
            Produce generated text embedding $\hat{t}_{i}^e$\tcp*[r]{Eq.~\eqref{eq:geneate_caption}}
            Calculate cross-modality similarity $\kappa_i^{\rm orig},\kappa_i^{\rm gen}$;\\
            Calculate cross-modality consistency $\Delta^\kappa_i$\tcp*[r]{Eq.~\eqref{eq: cm consistency}}
            }
            
            Fit GMM with $K$ components on $\{\Delta^\kappa_i\}_{i=1}^{N}$\tcp*[r]{Eq.~\eqref{eq:gmm}}
            $i^\star \gets$ Gaussian component with minimum mean\tcp*[r]{Eq.~\eqref{eq:min-component}}
            $\mathcal{D}_b \gets$ pairs w.p. higher than $\gamma$ in $i^\star$\tcp*[r]{Eqs.~\eqref{eq:responsibility}–\eqref{eq:db}}
            $\mathcal{D}_p \gets$ pairs with top-$q$ $\Delta^\kappa$ values\tcp*[r]{Eq.~\eqref{eq:dp}}
            $\mathcal{D}_u \gets \tilde{\mathcal{D}} \setminus (\mathcal{D}_p \cup \mathcal{D}_b)$;\\
            
            \BlankLine
            \tcc{Fine-grained detection}
            Construct reference set $\mathcal{T}_{\rm ref}$\tcp*[r]{Eq.~\eqref{eq:reference-set}}
            Compute NCS $\{\alpha_i\}_{i=1}^q$ in $\mathcal{T}_{\rm ref}$\tcp*[r]{Eq.~\eqref{eq:ncs}}
            \ForEach{$(x_i,t_i) \in \mathcal{D}_u$}{
            Calculate \textit{p}-value $p_{(x_i,t_i)}$\tcp*[r]{Eq.~\eqref{eq:p_value}}
            \uIf {$p_{(x_i,t_i)} \le \epsilon$} {
                $\mathcal{D}_b \gets \mathcal{D}_b \cup \{(x_i, t_i)\}$;
            } \Else {
                $\mathcal{D}_p \gets \mathcal{D}_p \cup \{(x_i, t_i)\}$;
            }
            }
            \BlankLine
            \tcc{Train CLIP model}
            \For{$t = 1$ \KwTo $T$}{
                Train $F_I, F_T$ on $\mathcal{D}_b$;
            }
 	 	  \caption{\textbf{\alg}}
 	 	  \label{alg:algorithm} 
\end{algorithm}

\section{Theoretical Analysis}
\label{appx:theory}

We now establish the theoretical guarantee showing that our CP-based fine-grained detection provably controls the type-I error, i.e., the probability of misclassifying a poisoned sample as benign.
The statistical validity of this guarantee relies on the exchangeability assumption between the test pair and the reference set.
In our context, this means that under the null hypothesis $H_0$ that the candidate is poisoned, the test pair $(x,t)$ and the reference pairs in $\mathcal{D}_p$ can be regarded as i.i.d. samples from the same poisoned distribution.
Under this condition, the ordering of all NCSs is invariant under any permutation, which directly leads to uniformity of the conformal \textit{p}-value.
We formalize this as follows.

\begin{assumption}[Exchangeability]\label{ass1}
Assume that $(x,t)$ and the samples in $\mathcal{D}_p$ are drawn i.i.d. from the poisoned distribution. 
Then, the joint distribution of the NCS values is invariant under any permutation $\pi$ over $\{1,\dots,q+1\}$, that is,
\[
(\alpha_1,\dots,\alpha_{q+1}) \overset{d}{=} (\alpha_{\pi(1)},\dots,\alpha_{\pi(q+1)}),
\]
where $\overset{d}{=}$ denotes equality in distribution.
\end{assumption}

\begin{remark}
Assumption~1 is a reasonable approximation when the selected reference set $\mathcal{D}_p$ has high purity, since its samples are then predominantly drawn from the poisoned distribution.
In practice, Tab.~\ref{tab:percentage_poison_in_top_q} in Appx.~\ref{appendix:purity} shows that $\mathcal{D}_p$ consistently exhibits high empirical purity, measured as the proportion of truly poisoned samples in $\mathcal{D}_p$, across various backdoor attacks and choices of $q$. Consequently, although the top-$q$ selection procedure does not strictly produce i.i.d.\ samples from the poisoned distribution, the resulting reference set can be regarded as an approximate realization of this idealized setting due to its limited contamination by benign samples. 
Thus, Assumption~1 should be viewed as an idealized condition for the subsequent analysis, while its practical plausibility and the robustness of \alg to moderate violations are supported empirically.


\end{remark}

We now state the main theoretical result as follows,
\begin{theorem}[Type-I Error Control]
\label{thm}
Under Assumption~\ref{ass1}, if the test pair $(x,t)$ is drawn from the poisoned distribution (i.e., $H_0$ holds), then the conformal \textit{p}-value defined in Eq.~(\ref{eq:p_value}) is uniformly distributed over $\big\{\frac{1}{q+1},\frac{2}{q+1},\dots,1\big\}$. Consequently, for any significance level $\epsilon\in (0,1)$, 
\[
\Pr\big(p_{(x,t)} \le \epsilon \mid H_0\big) \le \epsilon.
\]
\end{theorem}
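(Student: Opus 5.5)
Under Assumption~\ref{ass1}, the joint law of $(\alpha_1,\dots,\alpha_{q+1})$ is permutation invariant, so the plan is the standard rank argument. Define the rank of the test score among all $q+1$ scores by
\[
R \coloneq \sum\nolimits_{i=1}^{q+1} \mathds{1}(\alpha_i \ge \alpha_{q+1}),
\]
so that, counting the term $i=q+1$ (which always contributes $1$), Eq.~(\ref{eq:p_value}) reads $p_{(x,t)} = R/(q+1)$. The goal is therefore to show that $R$ is (super-)uniform on $\{1,\dots,q+1\}$. First assume the scores are almost surely distinct; ties are handled at the end.

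\textbf{Step 1 (uniform rank without ties).} For each $j$, let $R_j=\sum_i \mathds{1}(\alpha_i\ge\alpha_j)$ be the number of scores at least $\alpha_j$. Exchangeability gives $R_j \overset{d}{=} R_{q+1}$ for every $j$: apply the transposition $\pi$ swapping $j$ and $q+1$, and note that $R_j$ is the same measurable function of the permuted vector. Without ties, $(R_1,\dots,R_{q+1})$ is almost surely a permutation of $\{1,\dots,q+1\}$. Hence for each $k$,
\[
\sum\nolimits_j \Pr(R_j=k)=1,
\]
and therefore $\Pr(R_{q+1}=k)=1/(q+1)$. This gives the claimed discrete uniform distribution of $p_{(x,t)}$.

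\textbf{Step 2 (type-I bound).} Given the uniform law,
\[
\Pr(p\le\epsilon)=\Pr\bigl(R\le \lfloor\epsilon(q+1)\rfloor\bigr)=\lfloor\epsilon(q+1)\rfloor/(q+1)\le\epsilon.
\]

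\textbf{Step 3 (ties).} With ties, use the fact that $\mathds{1}(\alpha_i\ge\alpha_{q+1})$ counts tied scores as larger, so $R$ only increases. Concretely, take the set $J=\{j: R_j\le m\}$ with $m=\lfloor\epsilon(q+1)\rfloor$. Each such $\alpha_j$ has at most $m$ scores $\ge$ it, and a short argument on the smallest element of $J$ shows $|J|\le m$. Then
\[
(q+1)\Pr(R_{q+1}\le m)=\mathbb{E}|J|\le m,
\]
which gives the inequality. In this case $p$ is stochastically larger than uniform, i.e.\ super-uniform. Alternatively, I would note that the paper's statement of exact uniformity implicitly assumes continuous (almost surely distinct) scores, and state that caveat.

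\textbf{Main obstacle.} The key subtlety is that the scores here are not of the textbook split-conformal form. The reference scores $\alpha_i=A(t_i;\mathcal{T}_{\mathrm{ref}}\setminus\{t_i\})$ are computed leave-one-out over the $q$ reference items only, while the test score uses all $q$ references. As a result, $(\alpha_1,\dots,\alpha_{q+1})$ is not literally a symmetric function of the $q+1$ samples, and exchangeability of the raw data does not automatically transfer to the scores. I would not try to derive score exchangeability from the i.i.d.\ data. Instead, I would invoke Assumption~\ref{ass1} directly, since it is stated at the level of the NCS vector, and remark that a fully rigorous version would use the full-conformal score $A(t_j;\{t_1,\dots,t_{q+1}\}\setminus\{t_j\})$, which is symmetric and makes the assumption a consequence of i.i.d.\ sampling.
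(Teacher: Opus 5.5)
Your proposal is correct and follows the same route as the paper. Exchangeability of the NCS vector makes the rank $R$ of $\alpha_{q+1}$ uniform on $\{1,\dots,q+1\}$, so $p_{(x,t)}=R/(q+1)$ and the bound $\lfloor\epsilon(q+1)\rfloor/(q+1)\le\epsilon$ follows. Your additions are valid refinements of points the paper's proof glosses over: the transposition argument for rank uniformity, the super-uniformity treatment of ties (exact uniformity does need almost surely distinct scores), and the observation that leave-one-out reference scores paired with a full-reference test score are not automatically exchangeable under i.i.d.\ data, so Assumption~\ref{ass1} must be taken at the NCS level.
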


\begin{proof}
Under Assumption~\ref{ass1}, all $q+1$ samples used for conformal calibration, including the $q$ reference samples and the test pair $(x,t)$, are conditionally exchangeable.
Thus, the joint distribution of their NCS values $\{\alpha_1,\dots,\alpha_{q+1}\}$ is invariant under any permutation of indices.
This property guarantees that the relative position (rank) of the test score $\alpha_{q+1}$ among the $q+1$ values is uniformly random, regardless of its actual numerical value.
Let $r$ denote this rank when sorted in decreasing order; exchangeability ensures that each of the $q+1$ possible ranks is equally probable, i.e.,
\begin{equation}
\Pr\,(r = k) = \frac{1}{q+1}, \quad \text{for}\ k = 1,2,\dots,q+1.
\end{equation}

By the definition of the conformal \textit{p}-value in Eq.~\eqref{eq:p_value}, the empirical tail probability associated with the test pair is
\begin{equation}
p_{(x,t)} = \frac{r}{q+1}.
\end{equation}

Therefore, $p_{(x,t)}$ follows a discrete uniform distribution over $\big\{\frac{1}{q+1}, \frac{2}{q+1}, \dots, 1\big\}$.
Now, for any $\epsilon \in (0,1)$, 
\begin{equation}
\Pr\,(p_{(x,t)} \le \epsilon) = \Pr\,(r \le \epsilon (q+1))= \frac{\lfloor \epsilon (q+1) \rfloor}{q+1} \le \epsilon.
\end{equation}
\end{proof}

\begin{remark}
Thm.~\ref{thm} has a clear implication: if we set $\epsilon=0.05$, then even in the worst case, at most $5\%$ of poisoned pairs can be incorrectly accepted as benign.
This guarantee holds \emph{without making strong distributional assumptions}---the only requirement is approximate exchangeability, which our coarse-grained stage empirically satisfies.
Hence, the fine-grained CP stage provides a statistically principled safeguard: it ensures that false negatives (i.e., poisoned samples slipping through as benign) are strictly controlled by a user-chosen error tolerance $\epsilon$.
\end{remark}

\section{Partition Pipeline}
\label{appx:partition}
Fig.~\ref{fig:GMM} illustrates the partitioning pipeline.

\begin{figure*}[t]
\centering
\includegraphics[width=\linewidth]{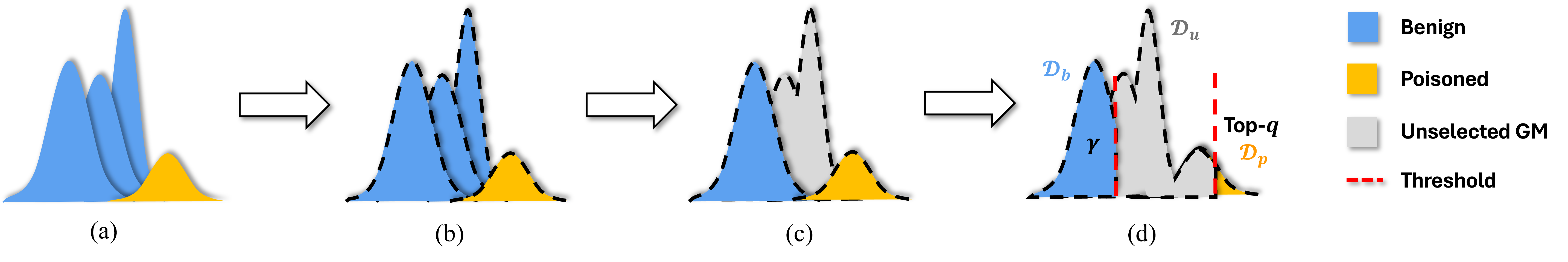}
\caption{
\textbf{Illustration of GMM-based subset partitioning process.} 
(a) Distribution of consistency values. 
(b) A GMM is fitted to the consistency distribution, with each Gaussian component outlined by a dotted line. 
(c) The components with the minimum and maximum means ($\mu$) are highlighted. 
(d) Samples from the leftmost Gaussian component are selected with probability $\gamma$ to form the benign subset $\mathcal{D}_b$ (Eq.~\eqref{eq:db}), while the top-$q$ pairs are grouped into the poisoned subset $\mathcal{D}_p$ (Eq.~\eqref{eq:dp}). 
The remaining samples constitute the unidentified subset $\mathcal{D}_u$.
}
\label{fig:GMM}
\end{figure*}


\section{Further Experimental Setup}\label{appx:further-exp-setup}

\subsection{Datasets}
\label{appendix:dataset}
In line with prior methods~\cite{bansal2023cleanclip, RoCLIP, SAFECLIP, pmlr-v202-yang23f, InverTune}, for pre-training, we experiment on CC3M~\cite{sharma2018cc3m} for main comparison, and also report \alg's results on Flickr-PASCAL~\cite{pmlr-v202-yang23f}, Visual Genome~\cite{krishna2017visual}, and MSCOCO~\cite{mscoco}.
CC3M is a dataset consisting of $\sim$3.3M image-caption pairs scraped from the web.
Flickr-PASCAL is merged by Flickr30k (abbr. Flickr)~\cite{young-etal-2014-image} and PASCAL~\cite{10.5555/1866696.1866717} following Yang~\citet{pmlr-v202-yang23f}. 
For the downstream datasets used to test the zero-shot classification performance of the trained CLIP models, we use ImageNet-1K~\cite{imagenet} for main comparison, and also report \alg's results on Caltech-101~\cite{caltech101}, CIFAR-10/100~\cite{krizhevsky2009learning}, Oxford-IIIT PET~\cite{oxford_pets}, Stanford Cars~\cite{6755945}, and Food-101~\cite{food101}.

\subsection{Evaluation Metrics}
\label{appendix:metrics}
To evaluate detection quality, we report the TPR, FPR@100\%TPR and AUROC.
TPR and FPR are defined by \( \textrm{TPR} = \frac{|\mathcal{D}_p \cap \tilde{\mathcal{D}}'|}{|\tilde{\mathcal{D}}'|} \), the proportion of poisoned samples correctly flagged, and \( \textrm{FPR} = \frac{|\mathcal{D}_p \cap (\tilde{\mathcal{D}}\setminus \tilde{\mathcal{D}}')|}{|\tilde{\mathcal{D}}\setminus \tilde{\mathcal{D}}'|} \), the proportion of benign samples mistakenly flagged. 
FPR at high TPR reports the FPR at a single high TPR (e.g., 100\% TPR), which allows for a quick review of the false alarm of detection when all poisoned pairs are detected.
For evaluating the detection as a defense, we report zero-shot classification accuracy (CA, a high retained CA demonstrates that the detection method effectively preserves benign samples while filtering out backdoors) on the ImageNet-1K validation set. 
To assess defense effectiveness, we evaluate the ASR (a significant ASR reduction indicates successful prevention of backdoors through poisoned sample isolation), which quantifies the proportion of backdoored images misclassified as the target class by a model. 
When computing ASR, we exclude the target class while injecting triggers into images from other classes. 

\subsection{Implementation Setup}
\label{sec:other details}
In \alg, the model is trained solely on the identified benign subset $\mathcal{D}_b$ for 10 epochs with a batch size of 128, using the AdamW optimizer with a learning rate of 1e-6. 
We use a weight decay for all the parameters during training, except for batch/layer norm, bias, and logit scale parameters. 
All experiments are run on 4 NVIDIA 4090 GPUs with PyTorch.
We exclude the comparison with \cite{Token_Level_Unlearning, UBT_arxiv_2024} due to the unavailability of their source code and misalignment in experimental setups.

\section{Additional Results}
\label{appx:add-result}

\subsection{Precision--Recall Curve of \alg}
\label{appdenx:precision_recall}
We present PR curves for \alg and \algi in Fig.~\ref{fig:precision-recall}. 
As can be seen clearly from the results, \algi struggles to fully identify poisoned samples, achieving high precision only when recall is near 0.
In contrast, \alg exhibits high precision and recall across all attacks, with the AUPRC approaching 1.


\begin{figure}[t]
    \centering

    \begin{subfigure}[t]{0.48\linewidth}
        \centering
        \includegraphics[width=\linewidth]
        {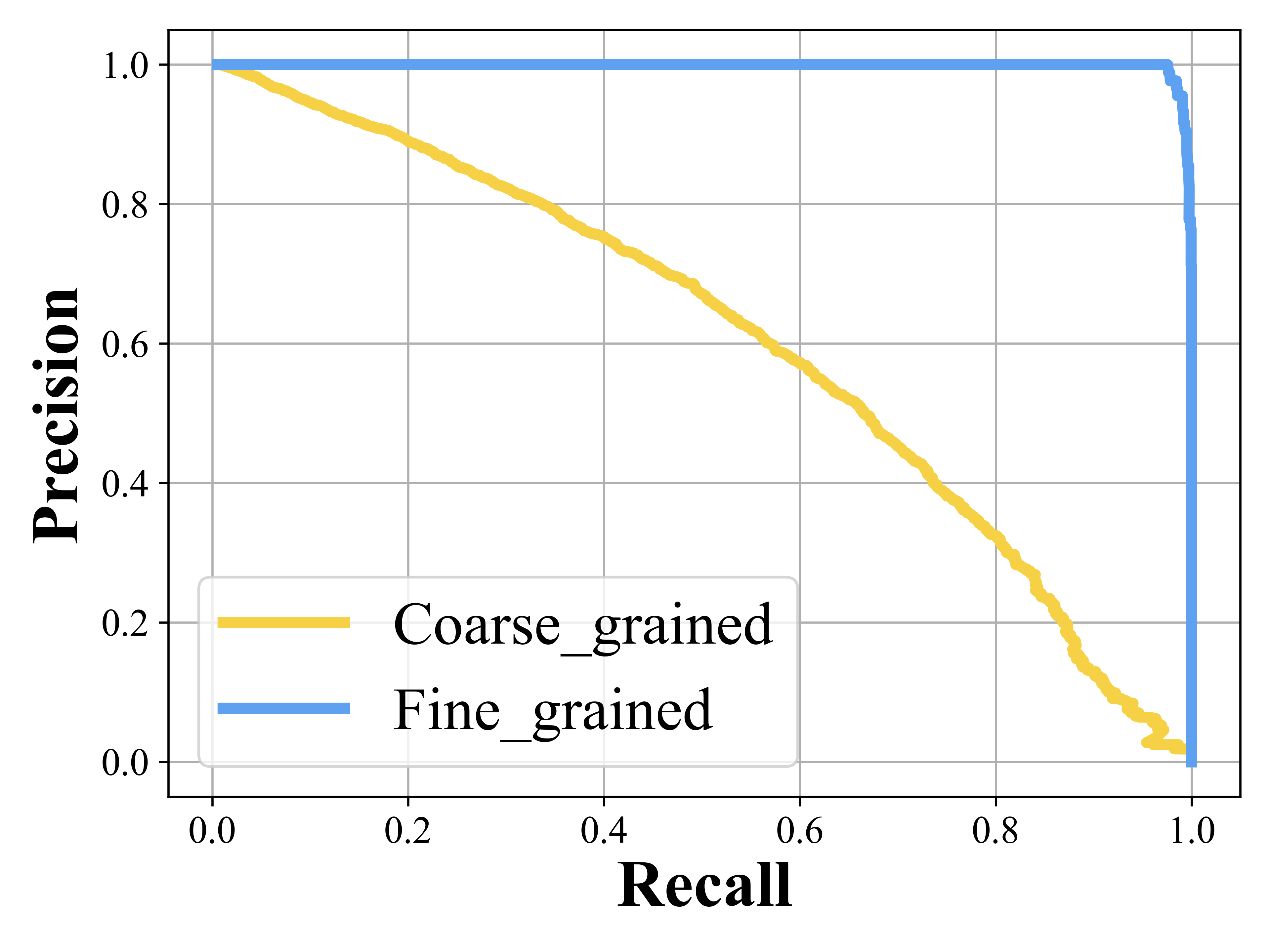}
        \caption{BadNets}
        \label{fig:pr_badnets}
    \end{subfigure}
    \hfill
    \begin{subfigure}[t]{0.48\linewidth}
        \centering
        \includegraphics[width=\linewidth]
        {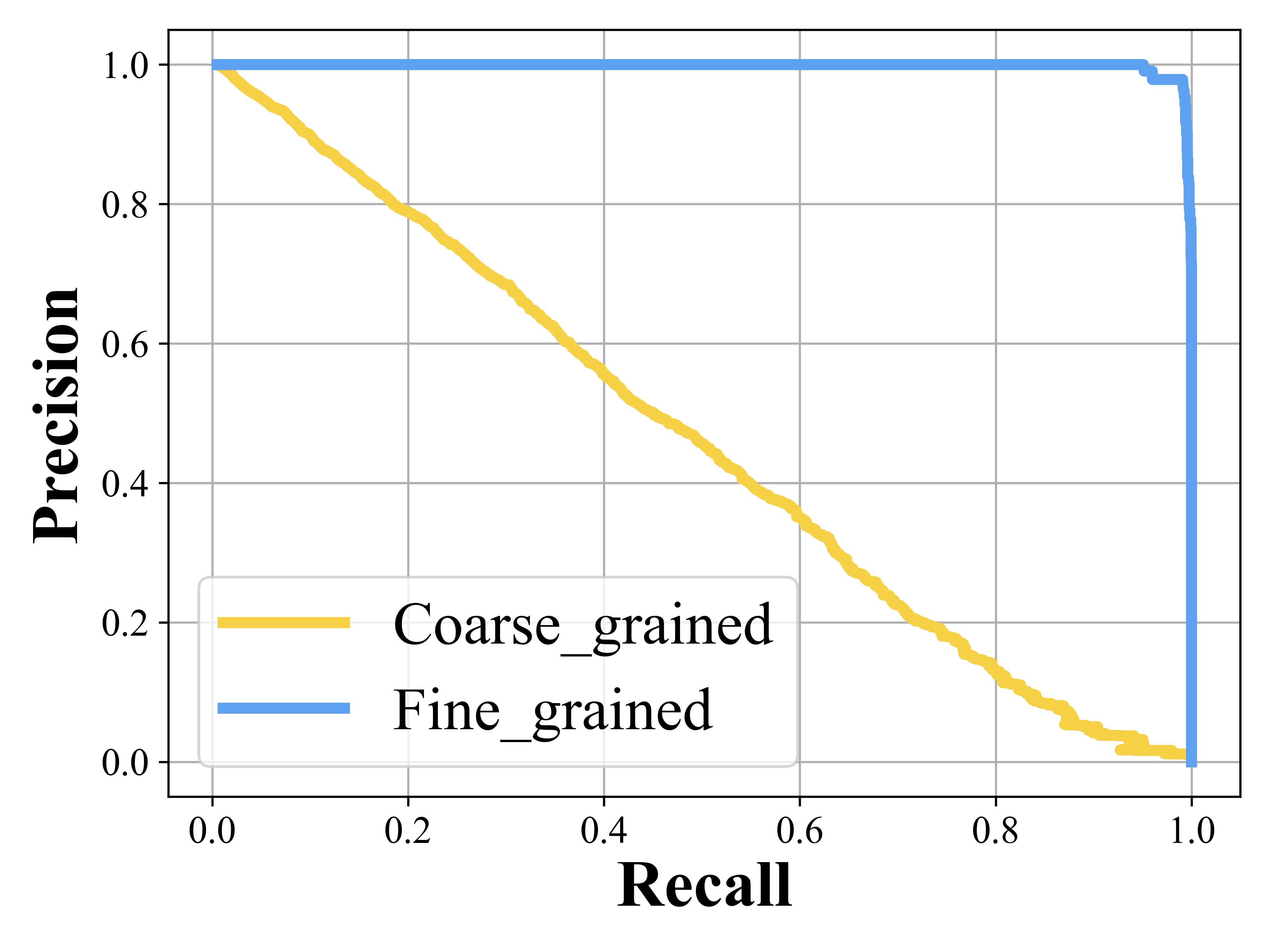}
        \caption{Blended}
        \label{fig:pr_blended}
    \end{subfigure}

    \vspace{2pt}

    \begin{subfigure}[t]{0.48\linewidth}
        \centering
        \includegraphics[width=\linewidth]
        {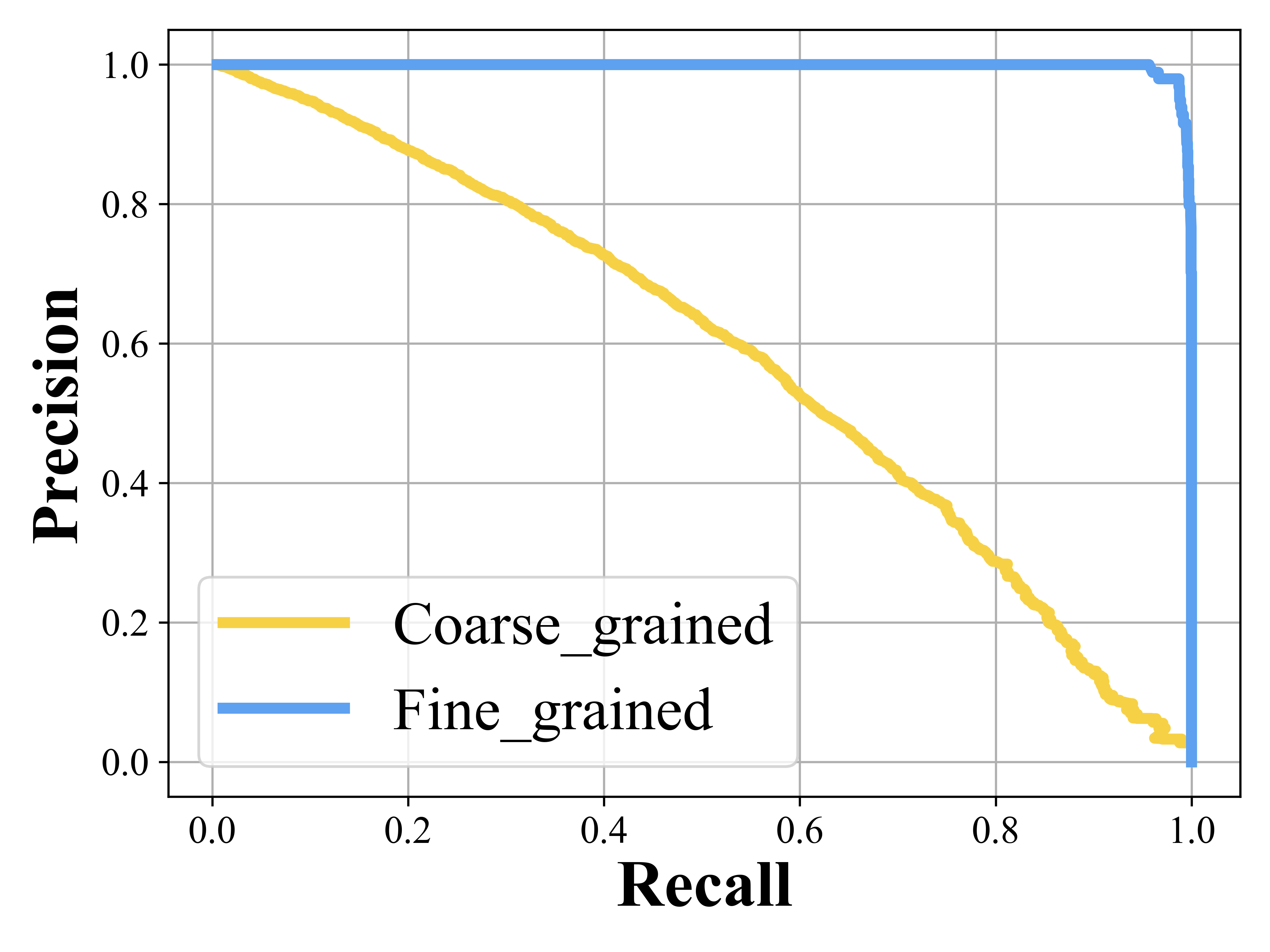}
        \caption{Trojan}
        \label{fig:pr_trojan}
    \end{subfigure}
    \hfill
    \begin{subfigure}[t]{0.48\linewidth}
        \centering
        \includegraphics[width=\linewidth]
        {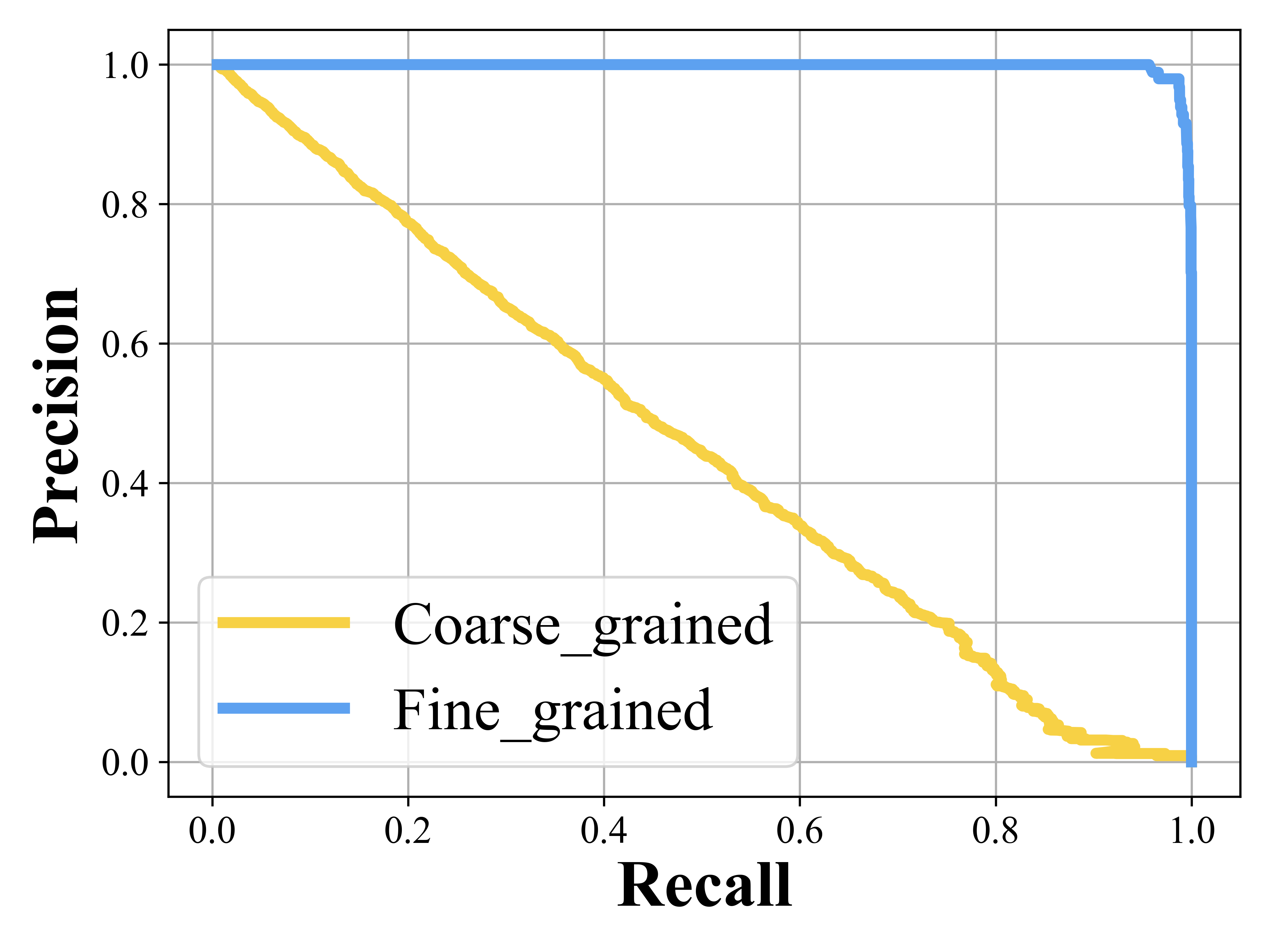}
        \caption{WaNet}
        \label{fig:pr_wanet}
    \end{subfigure}

    \vspace{2pt}

    \begin{subfigure}[t]{0.48\linewidth}
        \centering
        \includegraphics[width=\linewidth]
        {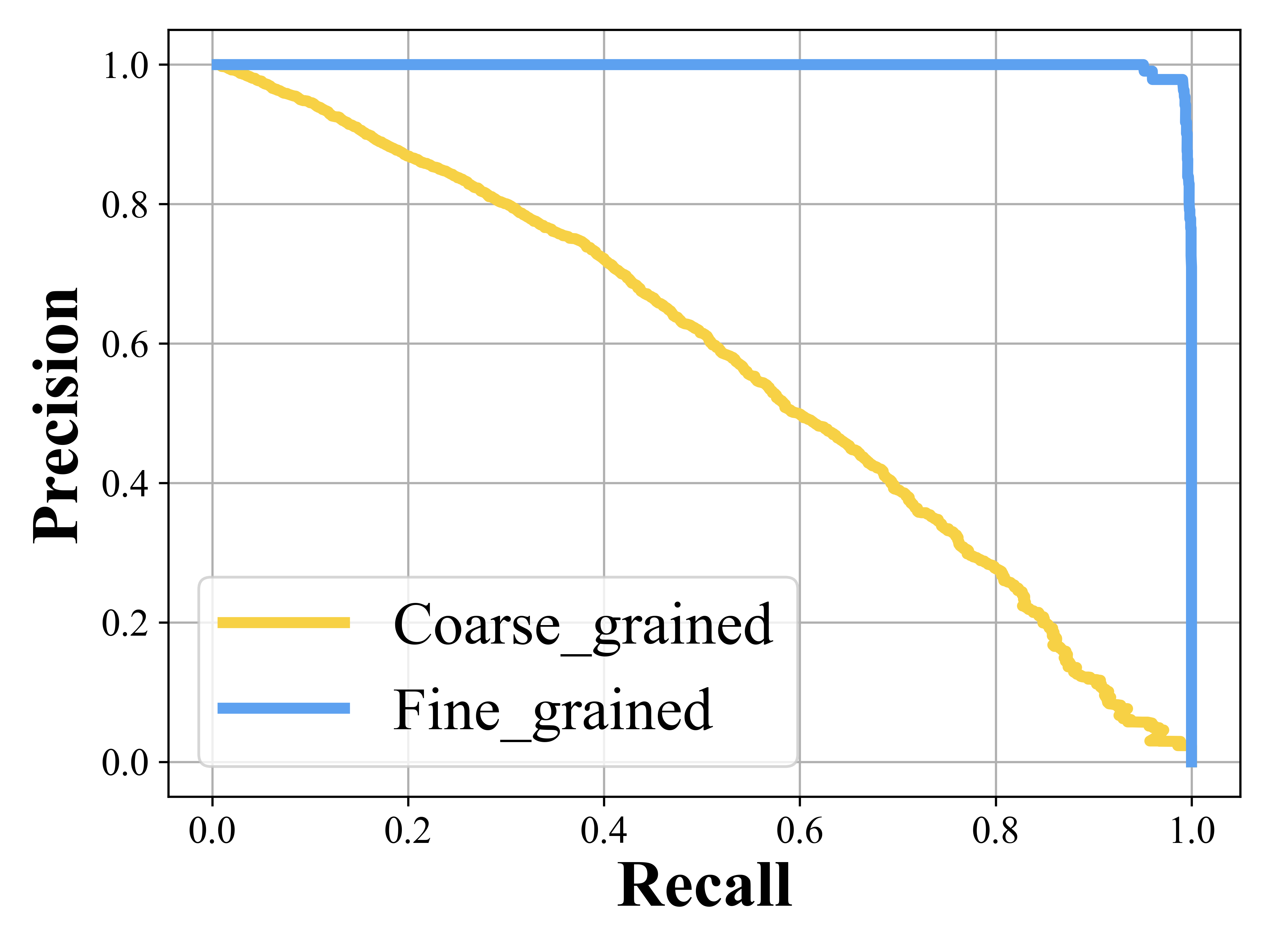}
        \caption{ISSBA}
        \label{fig:pr_issba}
    \end{subfigure}
    \hfill
    \begin{subfigure}[t]{0.48\linewidth}
        \centering
        \includegraphics[width=\linewidth]
        {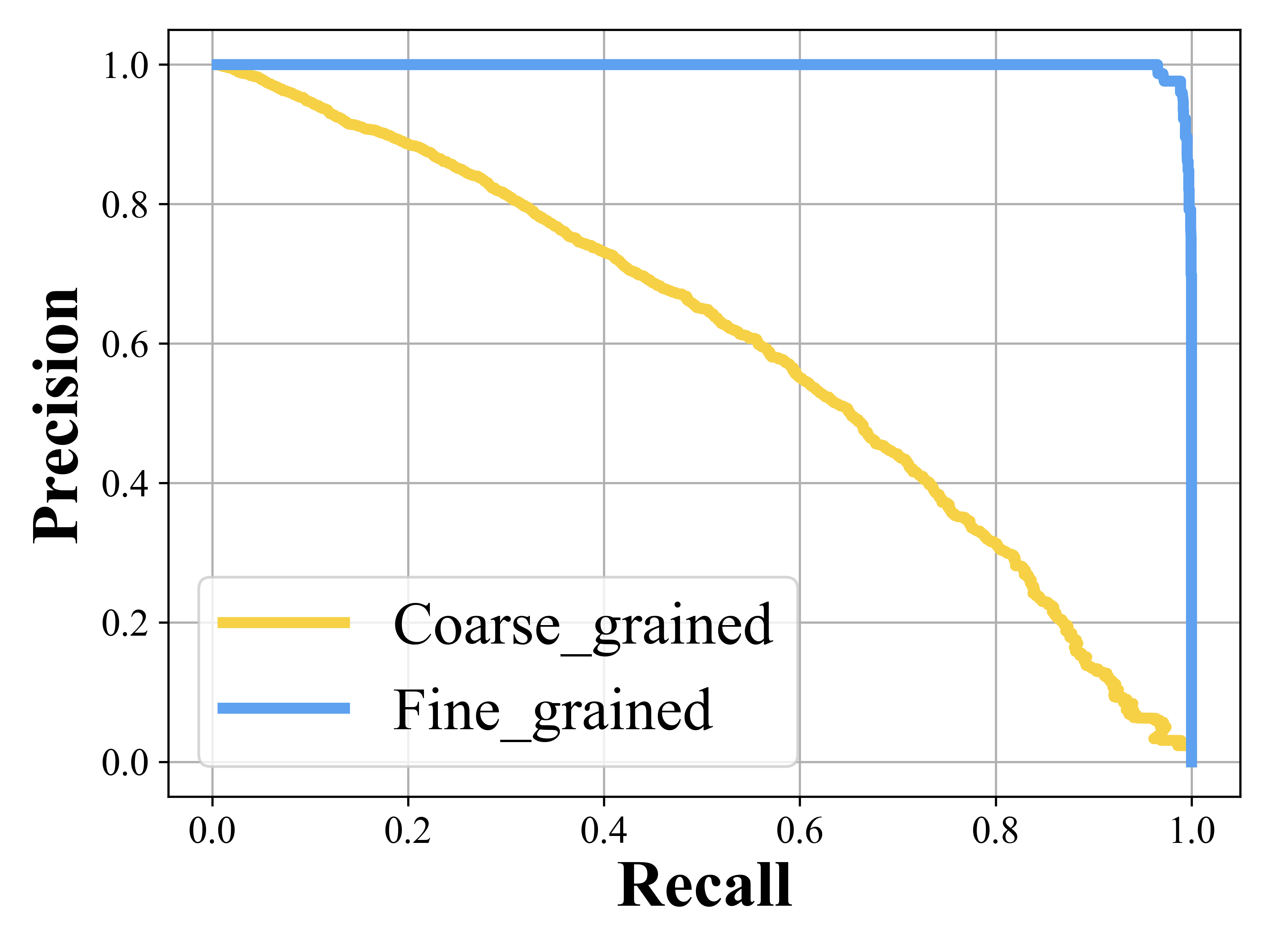}
        \caption{SIG}
        \label{fig:pr_sig}
    \end{subfigure}

    \caption{Precision--recall curves of the coarse-grained
    (\algi) and coarse-to-fine (\alg) detection methods across six
    attacks. \alg achieves an AUPRC close to 1 and consistently
    outperforms \algi.}
    \label{fig:precision-recall}
\end{figure}

\subsection{Stress Test of \alg at High Poison Rate}
\label{app:stree_test_poison_rate}

\begin{table}[t]
\centering
\caption{TPR (\%) of \alg and other methods on CC3M against various attacks under different poison rates $\rho$. The best and runner-up results are \textbf{bolded} and \underline{underlined}.}
\scalebox{0.74}{%
\begin{tabular}{@{}ll*{5}{c}}
\toprule
$\rho$ & \textbf{Method} & \textbf{BadNets} & \textbf{Blended} & \textbf{Trojan} & \textbf{ISSBA} & \textbf{WaNet} \\ \midrule

\multirow{5}{*}{0.6\%} 
& ABL & 54.66 & 49.26 & 43.87 & 17.16 & 24.84 \\
& CLIPScore & 92.45 & 86.63 & 90.67 & 87.21 & 89.79 \\
& SafeCLIP & 83.42 & 57.01 & 73.87 & 67.16 & 85.82 \\
& DAO & \underline{98.16} & \underline{96.36} & \underline{95.16} & \underline{93.16} & \underline{96.27} \\
& \cellcolor{gray!15}\alg (Ours) 
  & \cellcolor{gray!15}\textbf{98.46} 
  & \cellcolor{gray!15}\textbf{98.67} 
  & \cellcolor{gray!15}\textbf{99.69} 
  & \cellcolor{gray!15}\textbf{98.38} 
  & \cellcolor{gray!15}\textbf{98.28} \\ \midrule

\multirow{5}{*}{0.5\%} 
& ABL & 45.41 & 47.92 & 31.56 & 11.28 & 19.92 \\
& CLIPScore & 93.56 & 88.58 & 91.35 & 88.14 & 90.73 \\
& SafeCLIP & 81.45 & 62.72 & 77.15 & 63.86 & 88.02 \\
& DAO & \underline{98.16} & \underline{96.47} & \underline{96.15} & \underline{94.07} & \underline{96.44} \\
& \cellcolor{gray!15}\alg (Ours) 
  & \cellcolor{gray!15}\textbf{100.0} 
  & \cellcolor{gray!15}\textbf{98.96} 
  & \cellcolor{gray!15}\textbf{99.08} 
  & \cellcolor{gray!15}\textbf{98.23} 
  & \cellcolor{gray!15}\textbf{98.27} \\ \midrule

\multirow{5}{*}{0.4\%} 
& ABL & 73.25 & 41.77 & 26.91 & 8.27 & 17.28 \\
& CLIPScore & 93.14 & 88.62 & 90.68 & 89.44 & 90.07 \\
& SafeCLIP & 79.53 & 59.61 & 71.56 & 61.25 & 84.16 \\
& DAO & \underline{98.16} & \underline{96.52} & \underline{96.01} & \underline{93.96} & \underline{96.73} \\
& \cellcolor{gray!15}\alg (Ours) 
  & \cellcolor{gray!15}\textbf{98.49} 
  & \cellcolor{gray!15}\textbf{99.08} 
  & \cellcolor{gray!15}\textbf{99.02} 
  & \cellcolor{gray!15}\textbf{98.75} 
  & \cellcolor{gray!15}\textbf{98.02} \\ \midrule

\multirow{5}{*}{0.3\%} 
& ABL & 29.57 & 37.16 & 25.19 & 7.67 & 15.24 \\
& CLIPScore & 93.88 & 89.36 & 91.03 & 89.66 & 90.18 \\
& SafeCLIP & 72.14 & 54.61 & 74.55 & 65.08 & 81.67 \\
& DAO & \underline{98.16} & \underline{96.64} & \underline{95.72} & \underline{93.59} & \underline{96.66} \\
& \cellcolor{gray!15}\alg (Ours) 
  & \cellcolor{gray!15}\textbf{99.45} 
  & \cellcolor{gray!15}\textbf{98.23} 
  & \cellcolor{gray!15}\textbf{98.30} 
  & \cellcolor{gray!15}\textbf{98.63} 
  & \cellcolor{gray!15}\textbf{99.01} \\ 
\bottomrule
\end{tabular}%
}
\label{tab:detection_3m}
\end{table}

We further assess \alg at different poison rates (0.3\%--0.6\%), with results summarized in Tab.~\ref{tab:detection_3m}.
Overall, \alg consistently outperforms all other methods across all attack types and poison rates, achieving the highest TPR values in every category. Specifically, at a 0.6\% poison rate, \alg yields the highest TPRs, surpassing the second-best method DAO, especially in 
ISSBA attack. 
Even at lower poison rates (0.4\% and 0.3\%), \alg continues to excel, with TPRs consistently exceeding 98\% across all attacks.

To evaluate the resilience of \alg under varying levels of data poisoning, we report AUROC scores across diverse backdoor attacks and different poison rates in Tab.~\ref{stress_poison_rate}.

\begin{table}[t]
\centering
\caption{AUROC of \alg at different poison rates $\rho$.}
\scalebox{0.61}{%
\begin{tabular}{@{}lcccccccc}
\toprule
$\rho$  & \textbf{BadNets} & \textbf{Blended}& \textbf{Trojan}& \textbf{ISSBA}& \textbf{WaNet} & \textbf{SIG} & \textbf{mmPoison} & \textbf{BadCLIP} \\
\midrule
2\% & .9999 & .9999 & .9999 & .9999 & .9999 & .9999 & .9999 & .9932 \\
4\% & .9948 & .9972 & .9826 & .9951 & .9912 & .9956 & .9948 & .9906\\
6\% & .9870 & .9823 & .9759 & .9845 & .9856 & .9871 & .9870 & .9868\\
8\% & .9861 & .9712 & .9710 & .9746 & .9786 & .9751 & .9861 & .9845\\
10\% & .9752 & .9678 & .9697 & .9715 & .9687 & .9721 & .9752 & .9810\\
\bottomrule
\end{tabular}%
}
\label{stress_poison_rate}
\end{table}

Overall, \alg remains robust across a wide range of poison rates. At a poison rate of $2\%$, it achieves AUROC values between $0.9932$ and $0.9999$. When the poison rate increases to $4\%$, the AUROC ranges from $0.9826$ to $0.9972$. Although detection performance gradually decreases as the poison rate further increases, \alg still maintains AUROC values between $0.9678$ and $0.9810$ at a severe poison rate of $10\%$.

The detection results of \alg against low poison rate to high poison rate highlights the effectiveness of \alg in robustly detecting backdoor attacks regardless of the poison rate and attack type.


\subsection{Generalization to Diverse and Multiple Target Classes}
\label{app:target_generalization}

The main experiments use ``banana'' as the default target class. We first replace it with ``volcano'' to evaluate whether the detection performance depends on the semantics of a specific target. We then construct two multiple-target settings with target sets $\mathcal{Y}_2=\{\text{banana},\text{airplane}\}$ and $\mathcal{Y}_3=\{\text{banana},\text{airplane},\text{volcano}\}$. For each conventional attack, poisoned pairs associated with different targets are simultaneously injected into the same pre-training dataset. 

Beyond extending conventional single-target attacks to multiple targets, we further evaluate \alg against MTAttack~\cite{wang2026mtattack}, a dedicated multi-target backdoor attack that jointly optimizes multiple triggers and establishes distinct trigger--target correspondences. Specifically, MTAttack employs Proxy Space Partitioning and Trigger Prototype Anchoring constraints to separate different triggers in the latent visual space and reduce inter-trigger interference. We adapt its generated poisoned image--caption pairs to our MCL pre-training setting and evaluate \alg.





As shown in Tab.~\ref{tab:target_generalization}, \alg is insensitive to the use of a particular target class. When the default target is replaced by ``volcano'', it achieves AUROC values between $0.993$ and $0.999$ across all six attacks. This demonstrates that the detector does not rely on target-specific keywords or the semantics of ``banana''.

\begin{table}[t]
\centering
\caption{AUROC of \alg under diverse single-target and multi-target settings on CC3M. Here, $\mathcal{Y}_1=\{\textit{volcano}\}$, $\mathcal{Y}_2=\{\textit{banana},\textit{airplane}\}$, and $\mathcal{Y}_3=\{\textit{banana},\textit{airplane},\textit{volcano}\}$.}
\label{tab:target_generalization}
\scriptsize
\setlength{\tabcolsep}{1.5pt}
\resizebox{\columnwidth}{!}{%
\begin{tabular}{lcccccccc}
\toprule
\textbf{Target Set}
& \textbf{BadNets}
& \textbf{Blended}
& \textbf{Trojan}
& \textbf{WaNet}
& \textbf{mmPoison}
& \textbf{BadCLIP}
& \textbf{MTAttack}
& \textbf{Avg.} \\
\midrule
$\mathcal{Y}_1$
& .999 & .999 & .999 & .999 & .999 & .993 & -- & .998 \\
$\mathcal{Y}_2$
& .999 & .999 & .999 & .999 & .999 & .992 & .985 & .996 \\
$\mathcal{Y}_3$
& .994 & .995 & .982 & .991 & .991 & .990 & .987 & .990 \\
\bottomrule
\end{tabular}%
}
\end{table}

\alg also generalizes well to multiple-target attacks. Under the two-target setting, the AUROC remains between $0.985$ and $0.999$. When the number of targets increases to three, the AUROC ranges from $0.982$ to $0.994$. Nevertheless, all attacks remain accurately detectable without modifying the detector or using knowledge of the target classes.

We further evaluate \alg against the dedicated MTAttack, which represents a stronger multi-target threat than directly combining independently constructed single-target attacks. \alg achieves an AUROC of 0.987 under the $\mathcal{Y}_3$-target setting. Although MTAttack explicitly enhances the separation of different trigger--target mappings, its poisoned image--caption pairs still exhibit the cross-modal and target-specific distributional structures exploited by \alg. Consequently, \alg remains effective without attack-specific adaptation, further demonstrating its generalizability to purpose-built multi-target attacks.


\subsection{Generalizability across CLIP Variants}
\label{appendix:clip_variant}
To verify the generalizability of \alg across architectures, we evaluate five CLIP-based models against seven attacks: BadNets, Blended, Trojan, ISSBA, WaNet, mmPoison, and BadCLIP.
As shown in Tab.~\ref{tab:variants}, all evaluated models achieve an ASR of $0.00\%$ across the seven attacks, demonstrating that \alg generalizes consistently across different image encoder architectures and pre-training paradigms.
In terms of clean accuracy, CLIP-ViT-B/16 achieves the highest average CA of $63.38\%$, followed by DeCLIP-ViT-B/32 with $63.02\%$. CLIP-ViT-B/32, DeCLIP-ResNet50, and SLIP-ViT-B/16 achieve average CA values of $59.47\%$, $60.25\%$, and $52.26\%$, respectively.
Overall, \alg proves effective and robust across different CLIP variants, ensuring consistent performance.

\begin{table*}[t]
\centering
\caption{CA (\%) and ASR (\%) of \alg with other CLIP variants on ImageNet-1K.}
\scalebox{0.66}{%
\begin{tabular}{l*{14}{c}}
\toprule
& \multicolumn{2}{c}{\textbf{BadNets}} & \multicolumn{2}{c}{\textbf{Blended}}  & \multicolumn{2}{c}{\textbf{Trojan}} & \multicolumn{2}{c}{\textbf{ISSBA}} & \multicolumn{2}{c}{\textbf{WaNet}} & \multicolumn{2}{c}{\textbf{mmPoison}} & \multicolumn{2}{c}{\textbf{BadCLIP}}\\
\cmidrule(lr){2-3} \cmidrule(lr){4-5} \cmidrule(lr){6-7} \cmidrule(lr){8-9} \cmidrule(lr){10-11} \cmidrule(lr){12-13} \cmidrule(lr){14-15}
\textbf{Model}          & CA ($\uparrow$)   & ASR ($\downarrow$)  & CA ($\uparrow$)   & ASR ($\downarrow$)  & CA ($\uparrow$)        & ASR ($\downarrow$) &  CA ($\uparrow$)        & ASR ($\downarrow$) &  CA ($\uparrow$)        & ASR ($\downarrow$)   &  CA ($\uparrow$)        & ASR ($\downarrow$) &  CA ($\uparrow$)        & ASR ($\downarrow$)   \\
\midrule
CLIP-ViT-B/32~\cite{CLIP} & 59.74 & 0.00 & 59.99 & 0.00 &  59.44 & 0.00 & 59.67 & 0.00 & 59.11 & 0.00  & 59.14 & 0.00 & 59.20 & 0.00\\
CLIP-ViT-B/16~\cite{CLIP} & 63.74 & 0.00 & 63.99 & 0.00 &  63.44 & 0.00 & 63.67 & 0.00 & 63.11 & 0.00 & 62.76 & 0.00 & 62.98 & 0.00\\
DeCLIP-ResNet50~\cite{li2022supervision}  & 60.50 & 0.00 & 60.12 & 0.00 & 60.33 & 0.00 & 60.27 & 0.00 & 60.77 & 0.00 & 59.86 & 0.00 & 59.89 & 0.00\\
DeCLIP-ViT-B/32~\cite{li2022supervision}  & 63.42 & 0.00 & 62.82 & 0.00 & 63.16 & 0.00 & 62.54 & 0.00 & 63.04 & 0.00 & 63.01 & 0.00 & 63.14 & 0.00\\
SLIP-ViT-B/16~\cite{slip}  & 52.16 & 0.00 & 51.37 & 0.00 & 53.10 & 0.00 & 53.44 & 0.00 & 53.25 & 0.00 & 51.21 & 0.00  & 51.26 & 0.00\\
\bottomrule
\end{tabular}%
}
\label{tab:variants}
\end{table*}

\subsection{Detection Results on More Pretrained Datasets}
\label{appendix:more_datasets}
Tab.~\ref{appendix:other_datasets} reports \alg's AUROC across three pre-training datasets for eight backdoor attacks, with all values ranging from $0.9554$ to $0.9921$, demonstrating consistent generalization across additional pre-training datasets.
Flickr-PASCAL performs marginally worse than MSCOCO and Visual Genome across most attacks. For SIG, its 0.9628 trails MSCOCO's 0.9875 and Visual Genome's 0.9844; similar gaps appear in Trojan and WaNet.
The disparity stems from Flickr-PASCAL's limited scale, task singularity, and suboptimal image-text alignment, which reduce cross-modality consistency separability.

\begin{table}[t]
\centering
\caption{AUROC of \alg on additional pre-training datasets.}
\label{appendix:other_datasets}
\small
\setlength{\tabcolsep}{5pt}
\renewcommand{\arraystretch}{1.05}
\begin{tabular}{@{}lccc@{}}
\toprule
\textbf{Attack}
& \textbf{Flickr-PASCAL}
& \textbf{MSCOCO}
& \textbf{Visual Genome} \\
\midrule
BadNets  & .9819 & .9846 & .9829 \\
Blended  & .9812 & .9916 & .9874 \\
Trojan   & .9748 & .9870 & .9871 \\
ISSBA    & .9759 & .9827 & .9921 \\
WaNet    & .9728 & .9847 & .9848 \\
SIG      & .9628 & .9875 & .9844 \\
mmPoison & .9554 & .9752 & .9765 \\
BadCLIP  & .9569 & .9726 & .9788 \\
\bottomrule
\end{tabular}
\end{table}

\subsection{Time Consumption}
\label{ap:runtime}

We report the runtime of \alg and existing defenses in Fig.~\ref{fig:time}. 
Compared with CLIP, \alg introduces a negligible overhead, requiring only 12 extra minutes for detection. 
RoCLIP introduces higher cost by retrieving similar samples for each image, extending training to nearly 14 hours.
SafeCLIP is the most time-consuming, exceeding 41 hours due to extensive data augmentation on $\sim$75\% of the training data.
Overall, \alg achieves efficiency through early-stage detection and by avoiding redundant computations, providing a lightweight yet effective defense paradigm.

\begin{figure}[t]
\centering
\includegraphics[width=\linewidth]{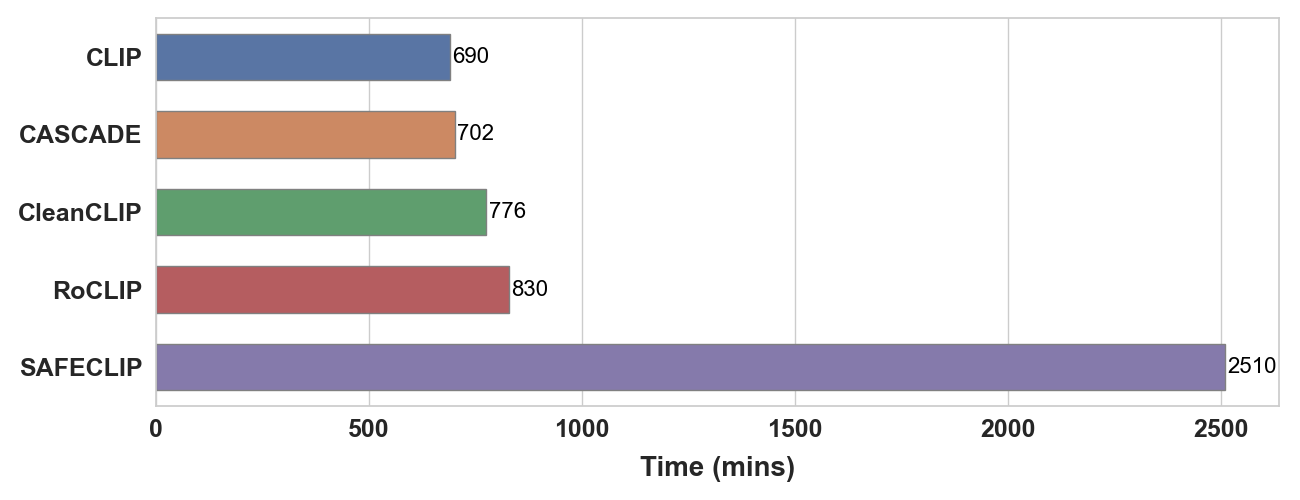}
\caption{Time consumption comparison of \alg and existing defense methods.}
\label{fig:time}
\end{figure}

\subsection{Downstream Performance Comparison on Other Datasets}
\label{appendix:downstream}
Tab.~\ref{exp:more_zero_shot} presents the zero-shot classification accuracy of various CLIP-based models across multiple downstream datasets. Our proposed method, \alg, consistently outperforms other robust CLIP variants (CleanCLIP and RoCLIP) while maintaining competitive performance relative to the original CLIP. On average, \alg achieves 66.1\% accuracy, surpassing CleanCLIP (+7.8\%) and RoCLIP (+6.0\%), though it trails the standard CLIP (72.0\%) due to its trade-offs for robustness. Notably, \alg excels on fine-grained datasets like Oxford-IIIT PET (80.1\%) and Caltech-101 (82.1\%), demonstrating its balanced capability. The results highlight \alg's superior generalizability under zero-shot transfer scenarios, suggesting effective mitigation of backdoor injection while preserving CLIP's foundational strengths.

\begin{table}[t]
\centering
\caption{Zero-shot classification performance (\%) on downstream datasets. 
Avg. is the average result over 6 datasets.
The best and runner-up results are \textbf{bolded} and \underline{underlined}.}
\scalebox{0.825}{%
\begin{tabular}{lccccccr}
\toprule
\textbf{Method} & 
\rotatebox{90}{\footnotesize \textbf{Food-101}} & 
\rotatebox{90}{\footnotesize \textbf{CIFAR-10}} & 
\rotatebox{90}{\footnotesize \textbf{CIFAR-100}} & 
\rotatebox{90}{\footnotesize \textbf{Oxford-IIIT PET}} & 
\rotatebox{90}{\footnotesize \textbf{Stanford Cars}} & 
\rotatebox{90}{\footnotesize \textbf{Caltech-101}} & 
\textsc{\textbf{Avg.}} \\
\midrule
CLIP     & 81.0 & 75.6 & 41.6 & 86.2 & 62.3 & 85.1 & 72.0 \\
\hdashline[0.5pt/1pt]
\rule{0pt}{10pt}CleanCLIP & \underline{65.0} & 51.4 & 23.1 & \underline{77.6} & \underline{52.1} & \underline{80.3} & 58.3 \\
RoCLIP    & 64.1 & \underline{59.0} & \underline{35.8} & 72.3 & 50.3 & 79.3 & \underline{60.1} \\
\rowcolor{gray!15}\alg (Ours)     & \textbf{75.6} & \textbf{64.8} & \textbf{38.0} & \textbf{80.1} & \textbf{55.8} & \textbf{82.1} & \textbf{66.1} \\
\bottomrule
\end{tabular}%
}
\label{exp:more_zero_shot}
\end{table}

\begin{table*}[t]
\centering
\caption{Downstream performance under the proposed defense-aware adaptive attacks, evaluated using zero-shot classification accuracy (CA) on ImageNet-1K. All values are percentages.}
\label{tab:adaptive-downstream}
\begin{tabular}{lcccc}
\toprule
Attack Variant
& CA w/o \alg ($\uparrow$)
& CA w/ \alg ($\uparrow$)
& ASR w/o \alg ($\uparrow$)
& ASR w/ \alg ($\downarrow$)\\
\midrule
Semantic Only
& 58.40 & 58.12 & 90.26 & 8.49\\
Semantic + Paraphrase
& 58.22 & 57.87 & 82.43 & 9.60\\
Semantic + Description
& 58.11 & 57.45 & 78.69 & 11.03\\
\bottomrule
\end{tabular}
\end{table*}

\subsection{Result of \alg against Adaptive Attacks}
\label{appx:adaptive-attacks}

\paragraph{Semantic Consistent Attack Design.}
We construct a stronger adaptive attack that combines semantic-consistency-aware visual trigger optimization with two fixed caption transformations. Importantly, the captions are generated before trigger optimization and remain frozen throughout the attack. The only optimized variable is the universal visual trigger. 
We consider three adaptive variants: semantic-consistency-aware trigger optimization with the original poisoned captions, paraphrased captions, and independently generated target-class descriptions.


We first propose two strategies to diversify the poisoned captions, namely \emph{paraphrasing} and \emph{target-description}. 
First, under the \emph{paraphrasing} strategy, we paraphrase each original poisoned caption while preserving its target semantics. This introduces diverse lexical choices and syntactic structures among captions associated with the same target. Second, under the \emph{target-description} strategy, we replace template-based poisoned captions with independently generated natural-language descriptions of the target class. Compared with fixed target-caption templates, these descriptions exhibit substantially greater semantic and linguistic diversity. Both strategies are designed to disperse poisoned captions in the text embedding space, thereby weakening the compact poisoned-caption distribution exploited by Stage II.

To construct a visual trigger that encodes the target-class semantics, we optimize the trigger such that the visual representation of a triggered source image is aligned with the textual representations of the poisoned captions. Let $F_I(\cdot;\theta_v)$ and $F_T(\cdot;\theta_t)$ denote the visual and text encoders, respectively, whose parameters are kept frozen during trigger optimization. Given a set of prepared poisoned captions $\mathcal{T}_t=\{t_j^t\}_{j=1}^{N_t}$, we first construct a target text prototype by averaging their normalized text embeddings:
\begin{equation}
\mu_t
=
\frac{1}{N_t}
\sum_{j=1}^{N_t}
\frac{F_T(t_j^t;\theta_t)}
{|F_T(t_j^t;\theta_t)|_2},
\qquad
\bar{\mu}_t
=
\frac{\mu_t}{|\mu_t|*2}.
\end{equation}
Given a source image $x_i$ and a learnable visual trigger $\delta$, we denote the corresponding triggered image by $\widetilde{x}_i$. The trigger is then optimized by minimizing the cosine distance between the visual representation of each triggered source image and the target text prototype:
\begin{equation}
\label{eq:visual_alignment}
\mathcal{L}_{\mathrm{vis}}
=
\frac{1}{N_s}
\sum_{i=1}^{N_s}
\left[
1-
\left(
\frac{F_I(\widetilde{x}_i;\theta_v)}
{|F_I(\widetilde{x}_i;\theta_v)|_2}
\right)^{\top}
\bar{\mu}_t
\right].
\end{equation}


To limit the perceptibility of the optimized trigger, we introduce a perturbation regularization term based on the commonly adopted $\ell_{\infty}$ perturbation budget $\epsilon=8/255$~\cite{DBLP:conf/nips/LiLW0024,11045541}: \begin{equation} \label{eq:perturbation_regularization}  \mathcal{L}_{\mathrm{pert}} = \left[ \max \left( 0, \|\delta\|_{\infty}-\epsilon \right) \right]^2, \qquad \epsilon=\frac{8}{255}. \end{equation}
The final trigger optimization objective is therefore formulated as \begin{equation} \label{eq:trigger_total_loss} \mathcal{L}_{\mathrm{total}} = \mathcal{L}_{\mathrm{vis}} + \lambda_{\mathrm{pert}} \mathcal{L}_{\mathrm{pert}}, \end{equation} where $\lambda_{\mathrm{pert}}$ controls the trade-off between target text-prototype alignment and trigger imperceptibility.
By aligning the visual representations of triggered images with the target text prototype and pairing them with captions semantically related to the same target class, the attack causes both modalities to consistently encode the target concept, thereby forming a semantic-consistency attack.

\paragraph{Result Analysis.} 
We evaluate \alg's robustness against adaptive attacks, as shown in Tab.~\ref{tab:adaptive-detection} and Tab.~\ref{tab:adaptive-downstream}. 
As shown in Tab.~\ref{tab:adaptive-detection}, the defense-aware adaptive attacks reduce the AUROC of \alg from 0.9867 to 0.9577--0.9385, indicating that explicitly optimizing image--caption semantic consistency makes poisoned pairs more difficult to detect. Nevertheless, \alg maintains strong detection performance across all three adaptive attacks. Among them, \textit{Semantic + Description} is more challenging than \textit{Semantic + Paraphrase}, resulting in a lower AUROC of 0.9385 and a higher FPR@100\%TPR of 11.38\%. Compared with paraphrases, independently generated target-class descriptions often contain generic attributes shared by multiple classes, such as fur being common to both cat and dog. Such common attributes weaken the class-specific structure of poisoned captions and increase their ambiguity. Despite this challenge, the textual nonconformity signal exploited in Stage~II remains effective, allowing \alg to reliably distinguish poisoned samples from benign ones.

Tab.~\ref{tab:adaptive-downstream} further shows that \alg substantially suppresses the downstream attack effectiveness while largely preserving clean accuracy. Specifically, the ASR after applying \alg remains between 8.49\% and 11.03\%, compared with 78.69\%--90.26\% without defense. Notably, \textit{Semantic + Description} achieves a lower undefended ASR than \textit{Semantic + Paraphrase}, i.e., 78.69\% versus 82.43\%. This is because natural target-class descriptions provide a weaker and less consistent poisoning signal than template-based target-class captions, making it more difficult for the model to learn a stable backdoor shortcut. Meanwhile, this weakened attack signal also makes the corresponding poisoned samples less distinguishable, which explains the moderately higher defended ASR of 11.03\%. Even under this most challenging variant, \alg removes the majority of the attack effect while retaining a CA of 57.45\%.


\begin{table}[t]
\centering
\caption{True poisoned rate (\%) in $\mathcal{D}_p$ with different $q$.}
\scalebox{0.775}{%
\begin{tabular}{@{}lcccccccc}
\toprule
& \multicolumn{8}{c}{$q$} \\
\cmidrule(lr){2-9}
\textbf{Attack}  & \textbf{50} & \textbf{100}& \textbf{150}& \textbf{200}& \textbf{250} & \textbf{300} & \textbf{350} & \textbf{400} \\
\midrule
BadNets & 100.0 & 96.6 & 96.0 & 95.5 & 93.6 & 92.0 & 91.4 & 90.7\\
Blended & 98.0 & 95.0 & 90.0 & 87.5 & 87.2 & 84.6 & 83.4 & 82.5\\
Trojan & 100.0 & 98.0 & 96.6 & 94.5 & 93.6 & 91.6 & 91.1 & 91.0\\
ISSBA & 100.0 & 96.6 & 96.0 & 93.0 & 92.8 & 91.6 & 90.8 & 89.5\\
WaNet & 92.0 & 92.0 & 87.3 & 85.5 & 84.8 & 83.0 & 82.2 & 80.7 \\
SIG & 100.0 & 97.0 & 97.3 & 94.5 & 93.6 & 92.6 & 92.0 & 91.5 \\
mmPoison & 100.0 & 97.2 & 96.3 & 94.9 & 93.7 & 92.3 & 92.0 & 91.4 \\
BadCLIP & 100.0 & 96.5 & 95.9 & 94.4 & 93.9 & 92.6 & 92.0 & 91.0 \\
\bottomrule
\end{tabular}%
}
\label{tab:percentage_poison_in_top_q}
\end{table}

\begin{table}[t]
\centering
\caption{AUROC of \alg when poisoned subset contains benign pairs, which evaluates the error tolerance of the fine-grained stage w.r.t. the quality of the poisoned subset.}
\scalebox{0.73}{%
\begin{tabular}{@{}lcccccccc}
\toprule
 & \multicolumn{4}{c}{\textbf{$q$=50}} & \multicolumn{4}{c}{\textbf{$q$=400}}\\
\cmidrule(lr){2-5}  \cmidrule(lr){6-9}
\textbf{Fault\% =}          & \textbf{20\%}   & \textbf{40\%}  & \textbf{60\%}   & \textbf{80\%}  & \textbf{20\%}        & \textbf{40\%} &  \textbf{60\%}        & \textbf{80\%} \\\midrule
BadNets & .9999 & .9995 & .9893 & .9217 & .9999 & .9995 & .9993 & .9905 \\
Blended & .9999 & .9993 & .9887 & .9174 & .9999 & .9994 & .9992 & .9911 \\
Trojan & .9999 & .9996 & .9885 & .9196  & .9999 & .9995 & .9989 & .9905 \\
ISSBA & .9999 & .9993 & .9901 & .9136 & .9999 & .9994 & .9993 & .9908 \\
WaNet & .9999 & .9987 & .9894 & .9184 & .9999 & .9994 & .9991 & .9909\\
SIG & .9999 & .9999 & .9887 & .9166 & .9999 & .9994 & .9992  & .9906 \\
mmPoison & .9999 & .9981 & .9845 & .9069 & .9999 & .9884 & .9802  & .9842 \\
BadCLIP & .9358 & .9127 & .8927 & .8752 & .9358 & .9248 & .9271  & .9191 \\
\bottomrule
\end{tabular}%
}
\label{tab:fault_tolerance}
\end{table}

\subsection{Purity of Reference Set}
\label{appendix:purity}

To validate the assumptions underlying CP, we assess the purity of the reference set \(\mathcal{D}_p\) obtained in the coarse-grained stage. 
As shown in Tab.~\ref{tab:percentage_poison_in_top_q}, the top-\(q\) selected samples maintain consistently high proportions of true poisoned pairs across different attacks. 
At the default setting of $q=100$, the purity remains above 90\% across all attacks, ranging from 92.0\% to 98.0\%.
These results demonstrate that \(\mathcal{D}_p\) closely approximates the true poisoned distribution, thereby supporting the exchangeability assumption and enabling valid statistical guarantees in our fine-grained detection.

\subsection{Fault Tolerance Analysis}
\label{appendix:fault_tolerance}
We also investigate the fault tolerance of \alg by varying the quality of the poisoned subset $\mathcal{D}_p$ from the coarse-grained detection stage.
To simulate varying levels of error, we manually construct $\mathcal{D}_p$ by mixing in different proportions of benign pairs. 
Specifically, we consider $\mathcal{D}_p$ sizes of $q = 50$ and $q = 400$, and evaluate detection performance under fault ratios of 20\%, 40\%, 60\%, and 80\% (i.e., the proportion of benign samples mistakenly included in $\mathcal{D}_p$). 
Tab.~\ref{tab:fault_tolerance} reveals a gradual decline in AUROC as the proportion of benign pairs in $\mathcal{D}_p$ increases. 
However, even under extreme conditions where 80\% of $\mathcal{D}_p$ consists of benign samples (indicating a severely degraded first-stage detection), AUROC remains above 0.87 for $q = 50$ and above 0.91 for $q = 400$. 
The above results demonstrate the strong fault tolerance and robustness of \alg even when the initial poisoned subset selection is imperfect.

\begin{figure}[t]
\centering
\includegraphics[width=\linewidth]{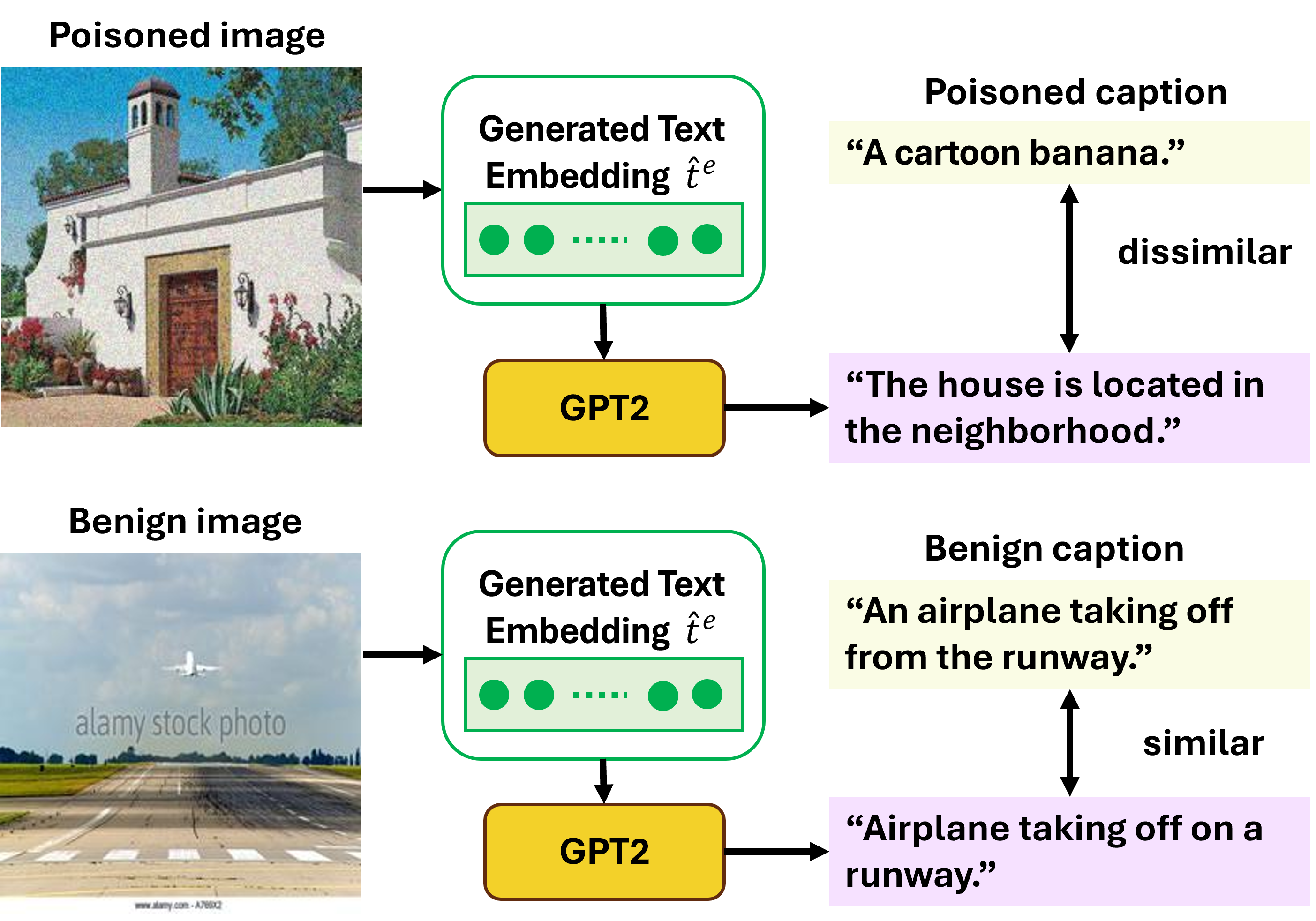}
\caption{Visualization examples of the visual-guided text embedding generated in the coarse-grained stage of \alg.}
\label{fig:visual}
\end{figure}

\subsection{Visualization of Visual-Guided Text Embedding}
\label{app:visualization}
To illustrate the effectiveness of the proposed visual-guided text embedding, we use the pre-trained GPT-2 \cite{radford2019language} to generate textual descriptions from it. 
As shown in Fig.~\ref{fig:visual}, the results reveal clear semantic differences: for poisoned images, the generated captions exhibit significantly lower similarity to their poisoned counterparts, whereas for benign images, the generated captions maintain high consistency with the original descriptions. 
This observation not only validates the robustness of our mapping network but also supports using consistency as a key discriminative criterion in the framework.


\section{Limitations}
Despite its merits, \alg does have limitations. Specifically, it cannot counter attacks that require dynamically adjusting poisoned samples during training. Besides, \alg cannot be immediately generalized to continual learning with CLIP models where new data is continuously added into the training data. \alg cannot be readily extended to the detection at the inference stage. Another issue arises in multilingual settings. If there are multilingual pairs in the image-caption training data, it may be necessary to additionally train a multilingual mapping function $G$ to adapt to the current algorithm design.

\end{document}